\documentclass[11pt]{article}
\usepackage{amsmath}
\usepackage{amsthm}
\usepackage{amssymb}
\usepackage{url}
\usepackage[final]{showkeys}
\usepackage[dvipsnames]{xcolor}
\usepackage{graphicx}
\usepackage{bm}

\usepackage[export]{adjustbox}

\usepackage[vlined]{algorithm2e}

\usepackage{fullpage}

\usepackage{stmaryrd}

\usepackage[colorlinks=true, allcolors=blue]{hyperref}

\usepackage{latexsym}

\newtheorem{theorem}{Theorem}[section]
\newtheorem{lemma}[theorem]{Lemma}

\newtheorem{corollary}[theorem]{Corollary}
\newtheorem{definition}[theorem]{Definition}

\newcommand{\EQ}{\;=\;}
\newcommand{\LE}{\;\le\;}
\newcommand{\LT}{\;<\;}

\newcommand{\RR}{\mathbb{R}}

\newcommand{\Solve}{\mbox{\sc Solve}}
\newcommand{\Recover}{\mbox{\sc Recover}}
\newcommand{\FindOther}{\mbox{\sc FindOther}}
\newcommand{\ZeroRecover}{\mbox{\sc ZeroRecover}}

\title{\null\vspace*{-30pt}Strategy recovery for Mean-Payoff Games\\ is as hard as solving them from scratch}

\newcommand{\ignore}[1]{}
\newcommand{\remove}[1]{}
\newcommand{\reals}{\hbox{$\rlap{\rm I} \> \kern-.2mm{\rm R}$}}

\newcommand{\MPGG}{G=(V,V_0,V_1,E,c)}

\newcommand{\MAX}{{player~$1$}}
\newcommand{\MIN}{{player~$0$}}

\newcommand{\bG}{\bar{G}}
\newcommand{\bc}{\bar{c}}
\newcommand{\bV}{\bar{V}}
\newcommand{\bE}{\bar{E}}
\newcommand{\bs}{\bar{\sigma}}

\newcommand{\hs}{\hat{\sigma}}
\newcommand{\hG}{\hat{G}}

\newcommand{\TO}{\,{\to}\,}

\begin{document}

\author{Dani Dorfman \thanks{Max Planck Institute for Informatics, Saarbr\"ucken, Germany. E-mail: {\tt
    ddorfman@mpi-inf.mpg.de}.} \and Uri Zwick
\thanks{Blavatnik School of
Computer Science and AI, Tel Aviv University,
  Israel. E-mail: {\tt
    zwick@tau.ac.il}.}}

\date{}

\maketitle

\null\vspace*{-20pt}
\begin{abstract}
\setlength{\baselineskip}{12pt} 
\setlength{\parskip}{10pt} 
\noindent
Mean-Payoff Games (MPGs) form an interesting family of deterministic, infinite-duration, zero-sum, turn-based games played by two players on a finite weighted directed graph. The goal of the players is to minimize, respectively maximize, the \emph{limiting average} of the costs of the edges on an infinite path constructed by the two players.  
The complexity of solving MPGs is a long-standing open question. The decision problem associated with these games is known to be in $\text{NP}\cap \text{co-NP}$, and even $\text{UP}\cap \text{co-UP}$, yet the problem is not known to be in $\text{P}$. This is one of the simplest combinatorial problems with such a status. 

\noindent
A full solution of an MPG gives the \emph{value} of each vertex of the game, along with a pair of \emph{optimal strategies} for the two players. (Each player has a succinct \emph{positional} optimal strategy.) The problem of finding optimal strategies for the players given the values of all vertices is known as the \emph{strategy recovery} problem. We describe a simple reduction from the problem of solving MPGs to the strategy recovery problem for MPGs. This shows that the strategy recovery problem for MPGs is essentially as hard as solving MPGs from scratch, i.e., without knowing the values. The reduction uses simple \emph{game transformations} and simple \emph{self-reducibility} and \emph{peeling} arguments. This answers a question of Andersson and Miltersen. Mamino obtained a similar result for \emph{stochastic} games with a limiting average objective, but his technique does not extend to deterministic games such as MPGs. We go one step further and show that finding an optimal strategy for one of the players of an MPG remains hard even if an optimal strategy of the second player, and hence all the values, are known.
\end{abstract}

\section{Introduction}\label{S-intro}

Given the distances from a vertex~$s$ to all other vertices in a weighted directed graph, a tree of shortest paths from~$s$ can be constructed in linear time. Similarly, given the \emph{values} of all vertices, or states, of a \emph{discounted} \emph{Markov Decision Process (MDP)} \cite{Puterman94}, or of a discounted \emph{2-Player Turn-Based Stochastic Game (2-TBSG)} \cite{HaMiZw13}, the 2-player version of MDPs, corresponding optimal strategies of the two players can again be found in linear time. (It is not known, however, how to find the values of a 2-TBSG in polynomial time when the discount factor is part of the input.)

In sharp contrast, we show that for the family of \emph{Mean Payoff Games (MPGs)} \cite{EhMy79}, a 2-player deterministic version of 2-TBSGs with a \emph{limiting average} objective, instead of a discounted objective, the \emph{Strategy Recovery} problem, i.e., the problem of finding optimal strategies for the two players \emph{given} the values of all vertices, is, up to polynomial factors, as hard as solving MPGs from scratch.

More formally, we can define four problems related to the solution of MPGs. (See Figure~\ref{fig:variants}.) The first problem asks for a complete solution of an MPG, i.e., obtaining the \emph{values} of all vertices \emph{and} a pair of \emph{optimal strategies} for the two players. (The values can actually be computed in polynomial time given a pair of optimal strategies, or even just one optimal strategy.) The \emph{Strategy Recovery} problem, introduced by Andersson and Miltersen \cite{AnMi09}, asks for the computation of optimal strategies for the two players \emph{given} the values of all vertices. We introduce two additional problems. The \emph{Find Other Strategy} problem asks for finding an optimal strategy of one of the players, say player~$1$, \emph{given} an optimal strategy of player~$0$. Finally, the \emph{Zero Strategy Recovery} problem asks for optimal strategies $\sigma_0$ and~$\sigma_1$ of the two players under the promise that all the values in the input game are~$0$. For brevity, we refer to the four problems as \Solve, $\Recover$, $\FindOther$ and $\ZeroRecover$.

Our main result is the following:

\begin{theorem}\label{T1}
    The four MPG problems $\Solve$, $\Recover$, $\FindOther$ and $\ZeroRecover$ are equivalent under polynomial time Turing reductions.
\end{theorem}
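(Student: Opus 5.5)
The trivial reductions come first. $\Solve$ answers every other problem directly. $\ZeroRecover$ is the special case of $\Recover$ in which all given values are~$0$. For $\FindOther\le\Recover$: given an optimal $\sigma_0$, we compute all values in polynomial time from the one-player game obtained by fixing $\sigma_0$ (a minimum/maximum mean cycle computation), and then call $\Recover$. So it suffices to prove $\Solve\le\ZeroRecover$ and $\ZeroRecover\le\FindOther$.

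A tool I will use throughout is \emph{certification}. Given a candidate pair $(\sigma_0,\sigma_1)$ and values $\nu$, we can check in polynomial time whether $\sigma_0$ guarantees at most $\nu$ and $\sigma_1$ guarantees at least $\nu$ from every vertex. Each check is a mean-cycle computation in the one-player graph obtained by fixing one strategy. Hence an oracle answer, even on an input violating the promise, is either certified correct or detectably wrong. This lets us call the $\ZeroRecover$ oracle on games we are unsure about, and read a certification failure as information.

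For $\Solve\le\ZeroRecover$ I plan a peeling argument combined with self-reducibility.
\begin{enumerate}
\item Values are rationals with denominator at most $n$ and lie in $[-W,W]$. So, given a test of the form ``is the value of $v$ at least $t$?'', binary search over candidate rationals (shifting all costs by $-t$) determines all values with polynomially many calls.
\item The same search applied to the largest value $\lambda$ identifies the top value class $U$. This class is a trap (dominion) for player~$0$ in which player~$1$ secures $\lambda$.
\item In the subgame induced by $U$ with costs shifted by $-\lambda$, all values are~$0$. One $\ZeroRecover$ call therefore yields optimal strategies on $U$.
\item Remove $U$ together with its player-$1$ attractor, on which player~$1$'s strategy is simply to enter the attractor. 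Recurse on the rest, which is again a game.
\end{enumerate}
Gluing the pieces gives optimal strategies for the whole game, and certification confirms the result.

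The main obstacle is the test in step~1: deciding sign questions about values using only an oracle that is promised zero-valued inputs. I will first try a gadget in which every vertex receives a cheap exit edge, at cost $t$, to a $0$-mean sink. The owner of each vertex then has an option that pins the value toward $t$. One then either obtains a game whose values are all $0$ exactly when the threshold question has a particular answer, or shows that a certification failure of the oracle's output witnesses a vertex of value $\ne t$ and points to which side it lies on. If this does not resolve cleanly, the fallback is a symmetric construction: take $G$ together with a copy in which players are swapped and costs negated, and link the two so that the values become $0$ while the oracle's strategies on the copy encode optimal play in $G$.

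For $\ZeroRecover\le\FindOther$, I will use the zero-value promise to manufacture an optimal player-$0$ strategy for free. Give every player-$0$ vertex an extra edge to a new sink with a $0$-cost self-loop; since all values are $0$, this does not change the values. The strategy ``always exit'' is then optimal for player~$0$. $\FindOther$ returns an optimal $\sigma_1$ in the modified game. This $\sigma_1$ guarantees $\ge 0$ against every player-$0$ strategy, including those that never use the new edges, so it is optimal in the original game. A symmetric gadget, with player-$1$ exits and the roles of the players exchanged (costs negated), yields $\sigma_0$.
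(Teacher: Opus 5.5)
\textbf{Verdict: there is a genuine gap.} Your trivial reductions and your escape gadget for $\ZeroRecover\le\FindOther$ match the paper. The gap is in $\Solve\le\ZeroRecover$. The step you call ``the main obstacle'' is the whole content of that reduction, and neither of your candidate constructions supplies it.

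\emph{The exit gadget cannot encode a threshold.} A single transient edge never affects a limiting average, so the cost $t$ on the exit edge is irrelevant. Suppose instead you shift all costs by $-t$ and give both players exits to a $0$-sink. Then the value of a player-$0$ vertex becomes the minimum of $0$ and the least mean of a cycle that player~$0$ can reach and close using only vertices of $V_0$ (symmetrically for player~$1$). This has nothing to do with whether $val_G(v)\ge t$, and the modified game is in general not zero-valued under either answer.

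\emph{The real obstruction is that the values of $G$ differ.} A $\ZeroRecover$ instance must have \emph{all} values equal, while the values of $G$ vary. So you first need a transformation that equalizes them without knowing them. That is the missing idea. The paper subdivides each edge by a player-$1$ vertex that may jump to any original vertex at cost $-M$. This yields $\bar G$ with all values equal to $\mu_{\max}$ (Lemma~\ref{L-equate}).

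\emph{Your fallback also needs this idea, and more.} Linking $G$ with its negated copy so as to average values gives value $(val_G(v)-val_G(u))/2$ at $(v,u)$; the paper uses a product here, since no deterministic union is known to average values. This is zero everywhere only if all values of $G$ coincide. Even for $\bar G\times(-\bar G)$, the oracle's optimal strategies may be entangled and need not directly give optimal play in $\bar G$. That is why the paper extracts strategies edge by edge via self-reducibility, with dovetailed and certified oracle calls.

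\emph{Steps 1--2 also presuppose information you do not have.} You assume per-vertex tests ``$val_G(v)\ge t$?'' and an exact identification of the top value class. After equalization you only learn about the global $\mu_{\max}$.

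With $\bar G$ in hand, your exit idea does work globally. Shift costs by $-t$ and add exits to a $0$-sink from the subdivision vertices. Every original and subdivision vertex then has value $\max\{0,\mu_{\max}-t\}$, so the game is zero-valued iff $\mu_{\max}\le t$. Binary search then finds $\mu_{\max}$, and one call on $\bar G$ shifted by $-\mu_{\max}$ gives optimal strategies of $\bar G$. You still need a further step to get values and strategies for $G$ itself.

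The paper never identifies the top class exactly. From an optimal $\bar\sigma_1$ it takes the nonempty set of original vertices from which player~$0$ cannot force the play into a vertex where $\bar\sigma_1$ jumps. It closes this set under the player-$1$ attractor and recurses on the remainder $B$ using $val_G(u)=\min\{\mu_{\max},val_{G[B]}(u)\}$. This identity holds even though $B$ may still contain vertices of value $\mu_{\max}$.

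\emph{On the oracle model.} ``Reading a certification failure as information'' assumes the oracle halts on inputs that violate the promise. If it need not halt, a single call decides nothing. Each test must then be run as several dovetailed instances, at least one of which is guaranteed to satisfy the promise, as the paper does.
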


This, in particular, answers an open question left by Andersson and Miltersen \cite{AnMi09} and extends a result of Mamino \cite{Mamino17} that applies only to games that allow stochastic moves. The proof of Theorem~\ref{T1}, without introducing stochastic moves, requires some new ideas. 

For parity games, Friedmann~\cite{Friedmann2011} shows that recovering winning strategies, even when all vertices are known to be winning for a specified player, and the problem of solving the game from scratch are equivalent under polynomial time Turing reductions.

The equivalence of $\FindOther$ to the other problems is somewhat surprising. The problem of finding an optimal strategy of a player given an optimal strategy of the other player seems to be a problem that involves only one player, and such problems can usually be solved in polynomial time. Indeed, an \emph{optimal counter strategy} of a player, also known as a \emph{best response}, i.e., an optimal strategy for a player when the other player uses a fixed strategy, can be found in polynomial time. However, a best response $\sigma_1$ of player~$1$ to an optimal strategy $\sigma_0$ of player~$0$ is not necessarily an optimal strategy for player~$1$, as can be shown by simple examples. (See Section~\ref{S-examples}.)

The equivalence of \Solve\ to the other three problems is under Turing reductions, and not Karp many-one reductions. We describe a way of solving an MPG by making a polynomial number of calls, and not just one call, to an oracle that solves the \Recover\ problem, or the \FindOther\ problem, or the \ZeroRecover\ problem. Each call is on a problem whose size is polynomial in the size of the input game.

We additionally show that \Recover, \FindOther, and \ZeroRecover\ are equivalent under Karp many-one reductions.

The strategy recovery problem for MPGs becomes easy again when given not only the values of all vertices, but also corresponding \emph{biases}, also known as \emph{potentials}. (See the next section.)

\begin{figure}[t]
    \centering
    \includegraphics[width=0.9\linewidth]{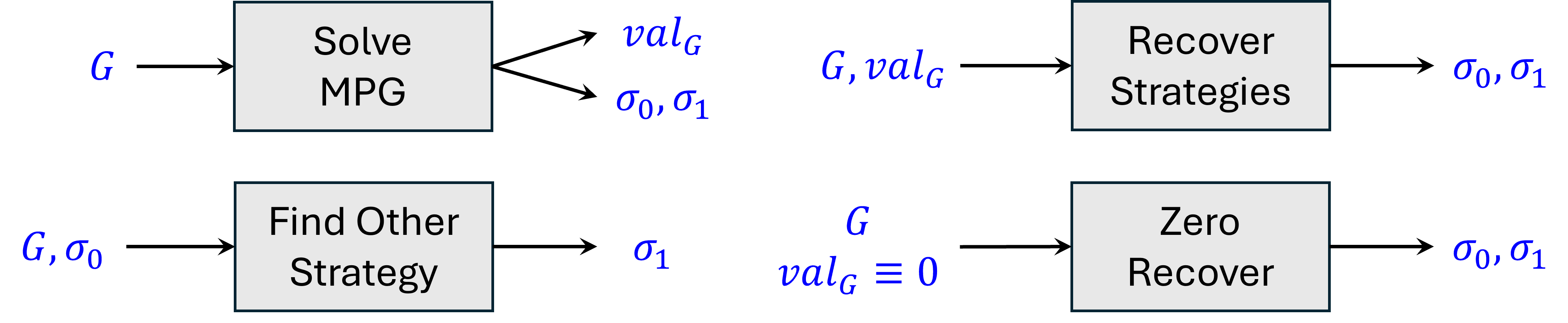}
    \caption{The four MPG problems considered. Here~$G$ stands for an MPG, $val_G$ is a vector giving the values of all vertices in~$G$, and $\sigma_0$ and $\sigma_1$ are optimal strategies of the two players. (The values $val_G$ are uniquely determined by~$G$, the optimal strategies $\sigma_0$ and $\sigma_1$ are not necessarily unique.)}
    \label{fig:variants}
\end{figure}

\subsection*{Paper organization}

In the next section we give a formal definition of MPGs and collect all the results that we rely on to obtain our reductions. In Section~\ref{S-overview} we give a high-level overview of our reductions. In Section~\ref{S-transform} we describe the three game transformations used by our reductions. The three transformations, in order, make all values of a game equal, negate all the values of a game, and take the average of two games. In Section~\ref{S-first} we describe a reduction from \Solve\ to \ZeroRecover, which is the hardest reduction. It uses the three transformations of Section~\ref{S-transform} along with \emph{self-reducibility} and \emph{peeling} techniques. In Section~\ref{S-second} we describe a reduction from \ZeroRecover\ to \FindOther\ that uses an \emph{escape} technique. The reductions from \FindOther\ to \Recover, and from \Recover\ to \Solve\ are trivial. In Section~\ref{S-ZeroRecover} we describe a Karp reduction from \Recover\ to \ZeroRecover, showing the equivalence of \Recover, \FindOther, and \ZeroRecover\ under polynomial time many-one reductions. In Section~\ref{S-examples} we give some illustrative examples. We conclude in Section~\ref{S-concl} with some concluding remarks and open problems.

\section{Mean-Payoff Games}\label{S-MPGs}

Mean-Payoff Games (MPGs) are a family of deterministic, infinite-duration, zero-sum, turn-based games played by two players on a finite weighted directed graph. They were defined, among others, by Ehrenfeucht and Mycielski \cite{EhMy79}, Gurvich, Karzanov, and Khachiyan \cite{GuKaKh88} and Zwick and Paterson~\cite{ZwPa96}. 

In this section we give a brief introduction to MPGs, and collect all the known facts about MPGs used by our reductions. For a more thorough treatment of MPGs, and for proofs of all the results stated here, see, e.g., Fijalkow et al.\ \cite{FijalkowEtAl26}.

\begin{definition}[MPGs]
    A \emph{Mean-Payoff Game} is a tuple $G=(V,V_0,V_1,E,c)$. Here, $V$ is the set of \emph{vertices} of the game, $V_0$ and $V_1$ form a partition of~$V$, i.e., $V_0\cup V_1=V$ and $V_0\cap V_1=\emptyset$, $E\subseteq V\times V$ is the set of directed \emph{edges} of the game, and $c:E\to\RR$ is a \emph{cost} function. It is assumed that each vertex in~$V$ has at least one outgoing edge. $V_0$ is the set of vertices controlled by player~$0$, the \emph{minimizer}, and $V_1$ is the set of vertices controlled by player~$1$, the \emph{maximizer}.
\end{definition}

\begin{definition}[Plays and outcomes]
The two players of an MPG jointly construct an infinite walk $u_0\TO u_1\TO u_2\dots$, also known as a \emph{play}, from an initial vertex $u_0$, as follows. If $u_i\in V_0$, then player~$0$ extends the walk by choosing an edge $(u_i\TO u_{i+1})\in E$. If $u_i\in V_1$, then player~$1$ chooses an edge $(u_i\TO u_{i+1})\in E$. The \emph{outcome} of the game is $\limsup_{N\to \infty}\frac{1}{N}\sum_{i=0}^{N-1}c(u_i\TO u_{i+1})$. \footnote{The outcome can also be defined as $\liminf_{N\to \infty}\frac{1}{N}\sum_{i=0}^{N-1}c(u_i\TO u_{i+1})$. This makes essentially no difference. The outcome cannot be defined as $\lim_{N\to \infty}\frac{1}{N}\sum_{i=0}^{N-1}c(u_i\TO u_{i+1})$ as this limit is not guaranteed to exist.} Player~$0$ wants to minimize the outcome, while player~$1$ wants to maximize it.
\end{definition}

\begin{definition}[Strategies]
    A strategy of player~$0$ in~$G$ is a function that maps every finite walk $u_0\TO u_1\TO\allowbreak \dots\TO u_i$ in~$G$, where $u_i\in V_0$, into an edge $(u_i\TO u_{i+1})\in E$. A strategy for player~$1$ is defined analogously. 
    We let $\Sigma_0$ and $\Sigma_1$ be the sets of all strategies of player~$0$ and player~$1$, respectively. A \emph{positional} strategy is a strategy that depends on $u_0\TO u_1\TO\dots\TO u_i$ only through~$u_i$. A positional strategy for player~$0$ is thus a function $\sigma_0$ that maps every vertex $u\in V_0$ to an edge $(u\TO v)\in E$. We let $\Sigma_0^P$ and $\Sigma_1^P$ be the set of all positional strategies of the two players. (Note that $\Sigma_0^P$ and $\Sigma_1^P$, unlike $\Sigma_0$ and $\Sigma_1$, are finite sets.)
\end{definition}

Strategies as defined above are \emph{deterministic}. It is possible to define \emph{randomized} strategies as probability distributions over deterministic strategies. But, as the games are deterministic and do not involve concurrent moves by the two players, it is not difficult to see that no player can benefit from using a randomized strategy. Thus, all the strategies considered in the paper are deterministic.

\begin{definition}[$play^{\sigma_0,\sigma_1}(u)$ and $out^{\sigma_0,\sigma_1}(u)$]
    If $\sigma_0$ and $\sigma_1$ are strategies of the two players, we let $play^{\sigma_0,\sigma_1}(u)$ be the walk obtained when the two players use $\sigma_0$ and $\sigma_1$ starting from $u\in V$. We let $out^{\sigma_0,\sigma_1}(u)$ be the resulting outcome.
\end{definition}

The following intuitive, but non-trivial, theorem says that MPGs are \emph{positionally determined}, i.e., they have \emph{values}, and that both players have positional strategies that are \emph{optimal} from every starting vertex. For a proof of the theorem, see, e.g., Ehrenfeucht and Mycielski \cite{EhMy79}, Gurvich, Karzanov, and Khachiyan \cite{GuKaKh88}, Bj{\"o}rklund, Sandberg, and Vorobyov \cite{BjSaVo04}, or Fijalkow et al.\ \cite{FijalkowEtAl26}. The theorem is a very restricted case of much more general theorems that apply to concurrent stochastic games. See, e.g., \cite{Everett1957,Gillette1957,liggett1969stochastic,mertens1981stochastic,Shapley53}.

\begin{theorem}\label{T-MPG}
    For every MPG $\MPGG$ there is a function $val_G:V\to \RR$ such that
    \[ val_G(u) \EQ \min_{\sigma_0\in \Sigma_0^P} \max_{\sigma_1\in \Sigma_1^P} \, out^{\sigma_0,\sigma_1}(u) \EQ \max_{\sigma_1\in \Sigma_1^P} \min_{\sigma_0\in \Sigma_0^P} \, out^{\sigma_0,\sigma_1}(u) \quad,\quad \text{for every $u\in V$}\;.\]    
    Furthermore, there exist positional strategies~$\sigma^*_0\in \Sigma_0^P$ and~$\sigma^*_1\in\Sigma_1^P$ of the two players such that
    \[ val_G(u) \EQ \sup_{\sigma_1\in \Sigma_1} \, out^{\sigma^*_0,\sigma_1}(u) \EQ \inf_{\sigma_0\in \Sigma_0} \, out^{\sigma_0,\sigma^*_1}(u) \quad,\quad \text{for every $u\in V$}\;,\] 
    (Note that the $\sup$ and $\inf$ are over \emph{arbitrary} strategies.) 
    The strategies $\sigma^*_0$ and~$\sigma^*_1$ are said to be positional \emph{optimal} strategies. (Positional optimal strategies are not necessarily unique.)
\end{theorem}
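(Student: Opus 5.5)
The statement is the classical positional determinacy theorem, and I would prove it by the standard finite-duration-game route of Ehrenfeucht and Mycielski, specialized to the limsup outcome.

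\textbf{Step 1: the first-cycle game.} I would first study the finite game $F_G(u)$. From~$u$ the players build a walk exactly as in~$G$, but play stops as soon as some vertex repeats. Suppose the walk is $u_0\TO\dots\TO u_j$ with $u_j=u_i$ for some $i<j$. The outcome is then the average cost of the closed cycle $u_i\TO\dots\TO u_j$. This is a finite zero-sum game of perfect information, so by backward induction (Zermelo) it has a value $v(u)$ and optimal (history-dependent) strategies.

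\textbf{Step 2: positional optimal strategies for the first-cycle game.} The crux is to show that both players of $F_G$ have positional optimal strategies. I would induct on the number of edges leaving vertices with out-degree at least two. Fix a vertex $x\in V_1$ with at least two outgoing edges, and split its out-edges into two nonempty classes, giving subgames $G',G''$. By induction each subgame has positional optimal strategies and values $v',v''$.

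The claim is that for every~$u$ we have $v(u)=\max(v'(u),v''(u))$, and that player~1's strategy is realized by whichever subgame attains the maximum. The delicate direction is showing player~0 cannot do better in~$G$ than $\max(v',v'')$. Player~1 may switch between the two classes at each visit to~$x$, but only the first visit matters before a cycle closes. A play that returns to~$x$ closes a cycle through~$x$, so player~1's choice at~$x$ is effectively made once.

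The harder subtlety is that the switching point must be consistent across all starting vertices in order to obtain a single positional strategy. I expect this to be the main obstacle. I would handle it in two steps. First, show that the value of a play after reaching~$x$ depends only on the continuation from~$x$. Second, use a ``combine optimal responses'' argument that patches the positional strategies of $G'$ and $G''$ by choosing at~$x$ the class giving the larger $v(x)$. The case of a vertex in $V_0$ is symmetric.

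\textbf{Step 3: transfer to the infinite game.} Let $\sigma_0^*,\sigma_1^*$ be positional optimal strategies for $F_G$. Take any (arbitrary) strategy $\sigma_1$ of player~1 in~$G$ against $\sigma_0^*$, and decompose the infinite play into a stack of simple cycles via the usual cycle-peeling: push vertices onto a stack, and pop off a cycle whenever a vertex repeats.

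Each peeled cycle arises as the terminal cycle of some play of $F_G$ consistent with $\sigma_0^*$ from some vertex~$w$. Because the first-cycle values satisfy $v(w)=v(w')$ along edges consistent with optimal play, each cycle has average at most $v(u)$. The residual stack has length at most~$|V|$. Hence $\limsup\frac1N\sum_{i<N}c(u_i\TO u_{i+1})\le v(u)$, and symmetrically $\ge v(u)$ against $\sigma_1^*$.

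This gives $\sup_{\sigma_1\in\Sigma_1} out^{\sigma_0^*,\sigma_1}(u)\le v(u)\le\inf_{\sigma_0\in\Sigma_0} out^{\sigma_0,\sigma_1^*}(u)$. Combined with the trivial inequality $\max\min\le\min\max$, this yields all the stated equalities with $val_G=v$. For the inequalities involving $\Sigma^P$, I would note that positional strategies form a subset of all strategies, and that $\sigma_0^*,\sigma_1^*$ themselves are positional.
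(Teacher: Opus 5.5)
The paper gives no proof of this theorem; it cites Ehrenfeucht--Mycielski, Gurvich--Karzanov--Khachiyan, Bj\"orklund--Sandberg--Vorobyov and Fijalkow et al. Your outline follows the Ehrenfeucht--Mycielski first-cycle route. Steps~1 and~3 are essentially standard. Step~2, which you rightly call the crux, is not actually proved, and the one concrete claim you give for it is false.

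Write $\tau',\tau''$ and $\sigma'_1,\sigma''_1$ for positional optimal strategies in $G',G''$, and class~1/class~2 for your two classes of out-edges of~$x$. In $F_G$ the outcome after reaching~$x$ does \emph{not} depend only on the continuation from~$x$, because the first cycle may close at a vertex visited \emph{before}~$x$. Player~1's single choice at~$x$ can exploit this history. Take the natural reply for player~0 when $v'(x)\ge v''(x)$: follow $\tau'$ until~$x$, and switch to $\tau''$ if player~1 uses a class-2 edge. This can close a ``mixed'' cycle, partly $\tau'$-consistent and partly $\tau''$-consistent, whose mean is bounded by neither $v'$ nor $v''$.

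Here is a concrete instance. Let $z\in V_0$ and $x\in V_1$, and let $t,m,a$ each have one out-edge. The edges are $z\to x$ (cost $30$), $z\to t$ ($-2$), $t\to x$ ($-2$), $m\to z$ ($0$) and $a\to a$ ($0$), with class~1 $=\{x\to a\}$ and class~2 $=\{x\to m\}$, both of cost~$0$.
\begin{itemize}
\item In $G'$ we get $v'\equiv 0$, and $\tau'(z)=(z\to x)$ is an optimal choice.
\item In $G''$ we get $v''(x)=-1$, with $\tau''(z)=(z\to t)$.
\item So $v'(x)>v''(x)$ strictly. Yet from~$z$ the patched strategy lets player~1 play $x\to m\to z$, closing $z\to x\to m\to z$ with mean $10>0=v(z)$.
\end{itemize}
Positional determinacy of $F_G$ therefore needs an argument your sketch does not supply.

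The missing ingredient is prefix-independence, which the infinite game has and $F_G$ lacks. The cleanest repair is to run your edge-splitting induction on the infinite game itself. Suppose $x\in V_1$ and $v'(x)\ge v''(x)$. Let player~0 play $\tau'$ before the first visit to~$x$. After each departure from~$x$, player~0 follows $\tau'$ or $\tau''$ according to the class of the edge player~1 just used.
\begin{itemize}
\item Each excursion from~$x$ is then consistent with a single positional strategy.
\item Such an excursion decomposes into simple cycles plus at most one simple path. Each cycle has mean at most $v'(x)$ (resp.\ $v''(x)\le v'(x)$), since player~1 could repeat it forever.
\item Hence every prefix sum is at most $N v'(x)+O(nW)$, where $W=\max_e|c(e)|$.
\item Finally $v'(x)\le v'(u)$, because $v'$ is non-increasing along $\tau'$-consistent walks; this again uses prefix-independence.
\end{itemize}
Player~0 has the same options in $G$ and $G'$, so $\sigma'_1$ still guarantees $v'$ in~$G$. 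This gives $val_G=v'$ and a positional optimal strategy for player~1. Splitting a vertex of player~0 symmetrically gives one for player~0. In the example, after $x\to m\to z$ the infinite play simply continues with $\tau''$ and the outcome is at most~$0$. This route makes Steps~1 and~3 unnecessary.

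If you keep Step~3 as written, you need $v$ to be \emph{non-increasing}, not constant, along $\sigma^*_0$-consistent edges. For first-cycle values this also requires an argument, for example a lasso argument using the uniform positional optimal strategies.
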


As Theorem~\ref{T-MPG} establishes the existence of positional optimal strategies, all strategies considered in the rest of the paper are positional. The following useful lemma is immediate:

\begin{lemma}\label{L-out}
    If $\MPGG$ is an MPG, $\sigma_0$ and $\sigma_1$ are positional strategies of the two players, and $u\in V$, then $play^{\sigma_0,\sigma_1}_G(u)$ is composed of a path that leads to a cycle which is repeated forever, and $out^{\sigma_0,\sigma_1}_G(u)$ is the average edge weight on the cycle. In particular, $out^{\sigma_0,\sigma_1}_G(u)$ can be computed in linear time.
\end{lemma}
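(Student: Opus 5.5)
The plan is to follow the walk $u_0\TO u_1\TO u_2\TO\cdots$ with $u_0=u$ that is produced when both players use the positional strategies $\sigma_0$ and $\sigma_1$. Each move is determined only by the current vertex: if $u_i\in V_0$ the next edge is $\sigma_0(u_i)$, and if $u_i\in V_1$ it is $\sigma_1(u_i)$. Hence there is a fixed successor function $f:V\to V$, and the walk is simply $u_{i+1}=f(u_i)$. Since $V$ is finite, the first $|V|+1$ vertices of the walk cannot all be distinct. Let $j<k$ be the first pair with $u_j=u_k$. Because the walk is generated by iterating $f$, we get $u_{i+(k-j)}=u_i$ for all $i\ge j$. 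So the play consists of the simple path $u_0\TO\cdots\TO u_j$ followed by the simple cycle $C=u_j\TO\cdots\TO u_k=u_j$, repeated forever.

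Next I compute the outcome. Let $L=k-j$ be the length of $C$, let $W$ be the total cost of its edges, and let $P$ be the cost of the prefix. For $N\ge j$, write $N-j=qL+r$ with $0\le r<L$. Then
\[ \sum_{i=0}^{N-1}c(u_i\TO u_{i+1}) \EQ P+qW+R_r , \]
where $|R_r|$ is at most $L\cdot\max_{e\in E}|c(e)|$. Dividing by $N$ and letting $N\to\infty$, the terms $P/N$ and $R_r/N$ tend to $0$, while $q/N\to 1/L$. So the limit exists and equals $W/L$, the average edge weight on $C$. In particular the $\limsup$ (or $\liminf$) in the definition of the outcome equals this average.

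For the running time, I simulate the walk while storing, for each vertex, the step at which it was first visited (or $-1$). This takes at most $|V|+1$ steps, each costing $O(1)$, until some vertex repeats; that gives $j$ and $k$. Summing the costs around the cycle then gives $W/L$. Initialisation and the simulation take $O(|V|+|E|)$ time, which is linear.

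No step is a real obstacle. The only point needing care is the limit computation, namely bounding the partial-cycle remainder uniformly so that it vanishes after dividing by $N$.
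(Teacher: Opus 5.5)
Your proof is correct. It is exactly the standard argument the paper has in mind: iterate the positional successor map, use pigeonhole to get the lasso, bound the partial-cycle remainder uniformly so that the limit $W/L$ exists, and record first-visit times to find the cycle in linear time. The paper gives no proof and simply calls the lemma immediate, so your write-up supplies the details it leaves implicit.
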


As mentioned, solving an MPG $\MPGG$ means finding the values $val_G(u)$, for every $u\in V$, and a pair of positional strategies $\sigma_0$ and $\sigma_1$ that are optimal from every starting vertex. The corresponding decision problem is easily shown to be in $\text{NP}\cap \text{co-NP}$, and even $\text{UP}\cap \text{co-UP}$ (Jurdzi{\'n}ski \cite{Judzinski98}). Yet the problem is not known to be in $\text{P}$. This is one of the simplest combinatorial problems with such a status. The question whether MPGs can be solved in polynomial time is a major open problem.

The fastest known algorithms for solving MPGs are either \emph{pseudopolynomial}, i.e., run in polynomial time when the edge costs are integers encoded in \emph{unary} (Zwick and Paterson \cite{ZwPa96}, Brim, Chaloupka, Doyen, Gentilini, and Raskin \cite{brim2011faster}, Comin and Rizzi \cite{CominR17}), or randomized and \emph{subexponential} (Bj{\"o}rklund and Vorobyov \cite{BjVo05,BjVo07}, Hansen and Zwick \cite{HaZw15}, Zwick \cite{Zwick26}).

Given a strategy $\sigma_0$ of player~$0$ in an MPG $\MPGG$, it is possible to find in polynomial time an optimal \emph{counter strategy} $\sigma_1$ of player~$1$, and vice versa.

\begin{lemma}\label{L-counter}
    Given an MPG $\MPGG$ and a strategy~$\sigma_0$ of player~$0$, it is possible to find in polynomial time an optimal \emph{counter strategy} $\sigma_1$ of player~$1$, i.e., a strategy $\sigma_1$ such that $out^{\sigma_0,\sigma_1}_G(u)=\max_{\sigma'_1\in \Sigma^P_1} out^{\sigma_0,\sigma'_1}_G(u)$, for every $u\in V$. Similarly, given a strategy $\sigma_1$ of player~$1$, it is possible to find in polynomial time an optimal counter strategy $\sigma_0$ of player~$0$.
\end{lemma}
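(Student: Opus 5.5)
Once $\sigma_0$ is fixed, player~$0$ no longer makes choices, so the game becomes a one-player problem: player~$1$ wants to maximize the limiting average over a fixed directed graph. I will solve that problem with classical maximum-mean-cycle techniques and then turn the solution into a positional strategy. The case where $\sigma_1$ is given is symmetric, with max replaced by min and player~$1$ by player~$0$. The steps are as follows.

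\emph{Step 1 (reduce to one player).} Let $G_{\sigma_0}$ be the graph obtained from~$G$ by keeping, for each $u\in V_0$, only the edge $\sigma_0(u)$, and keeping all edges out of vertices of~$V_1$. Every vertex still has an outgoing edge. Positional strategies $\sigma_1$ of player~$1$ correspond exactly to choosing one outgoing edge at each vertex of $G_{\sigma_0}$. By Lemma~\ref{L-out}, $out^{\sigma_0,\sigma_1}_G(u)$ is the mean cost of the cycle reached from~$u$ in the resulting functional graph. So for every $u$, $\max_{\sigma_1'} out^{\sigma_0,\sigma'_1}_G(u)$ is at most the maximum mean of a cycle reachable from~$u$ in $G_{\sigma_0}$. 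It also attains this bound: player~$1$ can walk to such a cycle and go around it forever, and this is positional provided the cycle is simple and the path to it does not revisit vertices.

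\emph{Step 2 (compute the target values).} Compute the strongly connected components of $G_{\sigma_0}$. For each component, compute its maximum cycle mean $\mu(C)$ with Karp's algorithm in $O(|V||E|)$ time, which also yields an optimal simple cycle $Z_C$. Set $\lambda(u)=\max\{\mu(C): C \text{ reachable from } u\}$ by dynamic programming over the DAG of components in reverse topological order. Any cycle lies inside a single component, so $\lambda(u)$ equals the maximum reachable cycle mean.

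\emph{Step 3 (build the strategy).} Process the distinct values of~$\lambda$ in decreasing order. For a value~$\lambda$, choose a component $C$ with $\mu(C)=\lambda$. Every vertex $w$ with $\lambda(w)=\lambda$ can reach $C$ or some other component with that mean. On the cycle $Z_C$, fix each vertex's successor along the cycle; for vertices of~$V_0$ this choice agrees with $\sigma_0$, because $Z_C$ is a cycle of $G_{\sigma_0}$. Then run a backward BFS from the already fixed vertices, inside the set $\{w:\lambda(w)=\lambda\}$, assigning each newly reached vertex its edge toward the fixed set. This produces an attractor tree. For $u\in V_0$ the unique outgoing edge is $\sigma_0(u)$, and it necessarily leads to a vertex of equal or larger $\lambda$. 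Since $\lambda$ never increases along edges, it leads to one of equal $\lambda$ that lies in the attractor. I expect this point to be the main obstacle: when several components share the same mean, every cycle of the appropriate mean has to be seeded before the BFS, and player~$0$ vertices must be shown to be consistently included.

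\emph{Step 4 (verify).} The resulting $\sigma_1$ sends each $u$ along a path that keeps $\lambda$ constant into a cycle of mean $\lambda(u)$. Hence $out^{\sigma_0,\sigma_1}(u)=\lambda(u)$, which is optimal by Step~1. All steps run in polynomial time, and in fact in $O(|V||E|)$.
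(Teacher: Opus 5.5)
Your proposal is correct and follows the paper's route: fix $\sigma_0$ to get a one-player game on $G_{\sigma_0}$, then use Karp's maximum mean cycle algorithm. The paper only asserts this, and you additionally supply the construction of a single positional strategy that is optimal from all vertices at once. The obstacle you flag is harmless. Seed every component $C$ with $\mu(C)=\lambda$ that lies inside $\{w:\lambda(w)=\lambda\}$, not just one such component, and not components whose vertices have larger $\lambda$. Since $\lambda$ is nonincreasing along edges, backward reachability in $G_{\sigma_0}$ restricted to this level set then covers all of it, and vertices of $V_0$ are handled automatically because their single out-edge is the only way they can reach the seeds.
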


Lemma~\ref{L-counter} is proved by noting that finding an optimal counter strategy in an MPG corresponds to solving a 1-player MPG, i.e., an MPG in which $V_0=\emptyset$ or $V_1=\emptyset$, and such a problem is essentially equivalent to finding \emph{Minimum (Maximum) Mean Weight Cycles}, a problem that can be easily solved in polynomial time. (See, e.g., Karp \cite{Karp78}.) As corollaries of Lemma~\ref{L-counter} we get the following:

\begin{corollary}\label{C-values}
    If $\sigma_0$ is known to be an optimal strategy of player~$0$ in an MPG $\MPGG$, then the values $val_G(u)$, for $u\in V$, can be computed in polynomial time. The same holds if $\sigma_1$ is known to be an optimal strategy of player~$1$ in~$G$.
\end{corollary}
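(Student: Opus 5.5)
The plan is to fix an optimal positional strategy $\sigma_0$ of player~$0$ and use Lemma~\ref{L-counter} to reduce to a one-player computation. By Lemma~\ref{L-counter} we compute, in polynomial time, an optimal counter strategy $\sigma_1$ of player~$1$ against $\sigma_0$, so that
\[ out^{\sigma_0,\sigma_1}_G(u) \EQ \max_{\sigma'_1\in\Sigma^P_1} out^{\sigma_0,\sigma'_1}_G(u) \quad\text{for every } u\in V. \]
The values are then read off directly as $out^{\sigma_0,\sigma_1}_G(u)$. By Lemma~\ref{L-out}, each of these quantities is the mean weight of the cycle reached from~$u$, which we find in linear time.

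The key step is to justify that this quantity equals $val_G(u)$. Theorem~\ref{T-MPG} defines what it means for $\sigma_0$ to be optimal:
\[ val_G(u) \EQ \sup_{\sigma'_1\in\Sigma_1} out^{\sigma_0,\sigma'_1}(u). \]
The supremum here is over arbitrary strategies, and $\Sigma_1^P\subseteq\Sigma_1$, so the maximum over positional counter strategies is at most $val_G(u)$. For the reverse inequality, the play with $\sigma_0$ fixed is a one-player MPG in which only player~$1$ moves. In such a game the supremum over all strategies is attained by a positional one. This is again positional determinacy, namely Theorem~\ref{T-MPG} applied to the one-player game $G_{\sigma_0}$, obtained from $G$ by keeping only the edges chosen by $\sigma_0$ out of $V_0$. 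In that game the optimal value of player~$1$ is exactly the supremum above, and player~$0$ has no choices there. Hence the two quantities coincide, and the computed outcomes are exactly the values.

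The case where an optimal $\sigma_1$ is given is symmetric: compute an optimal counter strategy of player~$0$ and evaluate the resulting outcomes. The only mildly delicate point is the equality between the supremum over arbitrary strategies and the maximum over positional ones. Restricting to the one-player game $G_{\sigma_0}$ and invoking Theorem~\ref{T-MPG} handles it cleanly. Everything else is polynomial by Lemmas~\ref{L-counter} and~\ref{L-out}.
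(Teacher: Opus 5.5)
Your proposal is correct and follows the paper's proof: compute an optimal counter strategy via Lemma~\ref{L-counter}, read off $val_G(u)=out_G^{\sigma_0,\sigma_1}(u)$, and handle the other case symmetrically. The paper asserts this equality without comment, and your justification of it is sound, though passing to $G_{\sigma_0}$ is unnecessary: the max-min form in Theorem~\ref{T-MPG} already gives $val_G(u)\le\max_{\sigma'_1\in\Sigma^P_1} out^{\sigma_0,\sigma'_1}_G(u)$.
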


\begin{proof}
    Using Lemma~\ref{L-counter} we compute an optimal counter strategy $\sigma_{1}$ of player~$1$ to $\sigma_0$. We then have $val_G(u)=out_G^{\sigma_0,\sigma_1}(u)$, for every $u\in V$. The other case is proved similarly.
\end{proof}

\begin{corollary}\label{C-poly}
    Given an MPG $\MPGG$ and strategies $\sigma_0$ and $\sigma_1$ of the two players, it is possible to check in polynomial time whether both $\sigma_0$ and $\sigma_1$ are optimal strategies. If they are, all values can be computed in polynomial time.
\end{corollary}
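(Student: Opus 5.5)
The statement to prove is Corollary~\ref{C-poly}. The plan is to reduce it to Lemma~\ref{L-counter} and Lemma~\ref{L-out}, checking optimality of each strategy against the other player's best response. First, using Lemma~\ref{L-counter}, compute an optimal counter strategy $\tau_1$ of player~$1$ to $\sigma_0$, and an optimal counter strategy $\tau_0$ of player~$0$ to $\sigma_1$. Then, using Lemma~\ref{L-out}, compute in linear time, for every $u\in V$, the quantities
\[ a(u) \EQ out^{\sigma_0,\tau_1}_G(u) \EQ \max_{\sigma'_1\in\Sigma^P_1} out^{\sigma_0,\sigma'_1}_G(u)\;, \qquad b(u) \EQ out^{\tau_0,\sigma_1}_G(u) \EQ \min_{\sigma'_0\in\Sigma^P_0} out^{\sigma'_0,\sigma_1}_G(u)\;. \]
The check is to accept if and only if $a(u)=b(u)$ for every $u\in V$. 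In that case, output $val_G(u)=a(u)$.

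Correctness rests on the standard sandwich argument, using Theorem~\ref{T-MPG}. For every $u$ we have
\[ b(u) \;\le\; \max_{\sigma'_1}\min_{\sigma'_0} out^{\sigma'_0,\sigma'_1}(u) \EQ val_G(u) \EQ \min_{\sigma'_0}\max_{\sigma'_1} out^{\sigma'_0,\sigma'_1}(u) \;\le\; a(u)\;. \]
It follows that $a(u)=b(u)$ for all $u$ if and only if $a(u)=val_G(u)$ and $b(u)=val_G(u)$ for all $u$.

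It remains to connect this to optimality as defined in Theorem~\ref{T-MPG}, where the $\sup$ is over arbitrary (not only positional) strategies. The direction "optimal $\Rightarrow$ check passes" is immediate, since positional strategies are among the arbitrary ones. For the converse, I need that $\sup_{\sigma_1\in\Sigma_1} out^{\sigma_0,\sigma_1}(u)$ is attained by a positional strategy when $\sigma_0$ is fixed and positional. This is the one slightly delicate step. Fixing $\sigma_0$ turns $G$ into a one-player game, and there a positional best response exists. This is exactly the content of Lemma~\ref{L-counter}, or equivalently of Theorem~\ref{T-MPG} applied to the game in which player~$0$'s edges are restricted to those of $\sigma_0$. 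Hence $a(u)$ equals the supremum over all strategies, and likewise for $b(u)$. So $a\equiv val_G$ means $\sigma_0$ is optimal, and $b\equiv val_G$ means $\sigma_1$ is optimal.

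All steps are two applications of Lemma~\ref{L-counter} plus $O(|V|)$ linear-time outcome computations, so the whole procedure runs in polynomial time.
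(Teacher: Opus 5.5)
Your proof is correct and takes essentially the paper's route: compute best responses to $\sigma_0$ and $\sigma_1$ with Lemma~\ref{L-counter} and compare outcomes with Lemma~\ref{L-out}. The paper tests whether $\sigma_0$ and $\sigma_1$ are mutual best responses, i.e.\ whether each of your $a(u)$, $b(u)$ equals $out_G^{\sigma_0,\sigma_1}(u)$; this is equivalent to your test $a\equiv b$, since $b(u)\le out_G^{\sigma_0,\sigma_1}(u)\le a(u)$ always, and your treatment of non-positional deviations rightly rests on Theorem~\ref{T-MPG} for the game restricted to $\sigma_0$ (resp.\ $\sigma_1$), a detail the paper leaves implicit and which Lemma~\ref{L-counter} alone does not supply, as it only quantifies over positional strategies.
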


\begin{proof}
    We use Lemma~\ref{L-counter} to compute optimal counter strategies $\sigma'_1$ and $\sigma'_0$ for $\sigma_0$ and $\sigma_1$. Using $\sigma'_0$ and $\sigma'_1$ we can check whether $\sigma_0$ and $\sigma_1$ are optimal counter strategies of each other. Indeed, $\sigma_1$ is an optimal counter strategy against $\sigma_0$ if and only if $out_G^{\sigma_0,\sigma_1}(u)=out_G^{\sigma_0,\sigma'_1}(u)$, for every $u\in V$. All these values can be computed in polynomial time using Lemma~\ref{L-out}. If $\sigma_0$ and $\sigma_1$ are optimal counter strategies of each other, then they are both optimal, and $val_G(u)=out_G^{\sigma_0,\sigma_1}(u)$, for every $u\in V$.
\end{proof}

We also need the following simple definition.

\begin{definition}[Restricted subgames]\label{D-sub}
    Let $\MPGG$ be an MPG and let $\emptyset\ne U\subseteq V$. Assume that each vertex $u\in U$ has at least one edge $(u\TO v)\in E$ such that $v\in U$. Then, we let $G[U]=(U,V_0\cap U,V_1\cap U, E\cap (U\times U),c')$ be the subgame of~$G$ \emph{restricted} to~$U$, or \emph{induced} by~$U$, where $c'$ is the function~$c$ restricted to $E\cap (U\times U)$.
\end{definition}

We also need the following very simple technical lemma. (It is actually a weaker form of Theorem~\ref{T-gain-bias} below.)

\begin{lemma}\label{L-equal} Let $G=(V,V_0,V_1,E,c)$ be an MPG. Then,
\[ val_G(u) \EQ 
    \begin{cases}
    \min_{(u\TO v)\in E}\, val_G(v) & \text{if } u\in V_0\\
    \max_{(u\TO v)\in E}\, val_G(v) & \text{if } u\in V_1
    \end{cases}\]
If $\sigma_0$ and $\sigma_1$ are positional optimal strategies of \MIN\ and \MAX, and $u\in V_i$, where $i\in\{0,1\}$, and $\sigma_i(u)=(u\TO v)$ then $val_G(u)=val_G(v)$.
\end{lemma}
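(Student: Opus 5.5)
The plan is to prove both parts of Lemma~\ref{L-equal} directly from Theorem~\ref{T-MPG}. The main tool is that the outcome of a play does not depend on any finite prefix. Fix positional optimal strategies $\sigma^*_0,\sigma^*_1$ as given by Theorem~\ref{T-MPG}. Take $u\in V_0$ (the case $u\in V_1$ is symmetric). I will establish the two inequalities $val_G(u)\le \min_{(u\to v)\in E} val_G(v)$ and $val_G(u)\ge \min_{(u\to v)\in E} val_G(v)$ separately.

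For the upper bound, pick an edge $(u\TO v)$ that attains the minimum. Define a (non-positional) strategy $\tau_0$ for player~$0$: it first moves $u\TO v$, and from then on it plays $\sigma^*_0$, treating the history as if it had started at~$v$. Against any strategy $\sigma_1$ of player~$1$ starting at~$u$, the resulting play consists of one edge followed by a play from~$v$. In that suffix player~$0$ uses $\sigma^*_0$ and player~$1$ uses some (history-dependent) strategy. Dropping a single edge of finite cost does not change the $\limsup$ of the averages. Hence the outcome is at most $\sup_{\sigma_1} out^{\sigma^*_0,\sigma_1}(v)=val_G(v)$. Because Theorem~\ref{T-MPG} states that $val_G(u)=\inf_{\sigma_0}out^{\sigma_0,\sigma^*_1}(u)$ over arbitrary strategies, we get $val_G(u)\le out^{\tau_0,\sigma^*_1}(u)\le val_G(v)$.

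For the lower bound, look at the play of $\sigma^*_0$ against $\sigma^*_1$ from~$u$. Its first edge is $\sigma^*_0(u)=(u\TO w)$. Since both strategies are positional, the remainder of the play is exactly $play^{\sigma^*_0,\sigma^*_1}(w)$. By Lemma~\ref{L-out} the outcome is the mean of the eventual cycle, so $out^{\sigma^*_0,\sigma^*_1}(u)=out^{\sigma^*_0,\sigma^*_1}(w)$. Moreover, $out^{\sigma^*_0,\sigma^*_1}(x)=val_G(x)$ for every~$x$: optimality of $\sigma^*_0$ gives $\le$, and optimality of $\sigma^*_1$ gives $\ge$. Therefore $val_G(u)=val_G(w)\ge\min_{(u\to v)}val_G(v)$.

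Combining the two bounds gives the displayed equation. The same computation also gives the second claim, since for $u\in V_i$ and $\sigma_i(u)=(u\TO v)$ it shows $val_G(u)=val_G(v)$ directly. One point needs care: the second claim is stated for \emph{arbitrary} positional optimal strategies $\sigma_0,\sigma_1$, so the argument must not rely on the particular pair from the theorem. I will therefore run the lower-bound argument with the given $\sigma_0,\sigma_1$ in place of $\sigma^*_0,\sigma^*_1$. This is valid because any positional optimal pair satisfies $out^{\sigma_0,\sigma_1}=val_G$ by the same sandwich argument. The only mildly delicate step is the prefix-shift in the upper bound, which requires the theorem's guarantee against arbitrary strategies; the rest is routine.
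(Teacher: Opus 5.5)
Your proof is correct, but it takes a different route from the paper. The paper gives no proof of this lemma; it only remarks that the lemma is a weaker form of Theorem~\ref{T-gain-bias}, since the first coordinate of the lexicographic Bellman equations, with $g=val_G$, gives the displayed identity at once. You instead derive everything from Theorem~\ref{T-MPG} alone. For one inequality you use the one-step prefix shift, a non-positional deviation $\tau_0$ followed by $\sigma^*_0$; this is exactly where the theorem's guarantee against arbitrary strategies is needed. For the other you use the identity $out^{\sigma_0,\sigma_1}=val_G$ for any optimal positional pair, together with the fact that a positional play from $u$ is one edge followed by the play from the successor.

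Your route buys two things.
\begin{itemize}
\item It grounds the lemma in the result the reductions actually depend on. The paper explicitly says its reductions do not rely on Theorem~\ref{T-gain-bias}, whose existence claim is a considerably stronger fact.
\item It handles the second claim, for \emph{arbitrary} positional optimal strategies, uniformly with the first. Theorem~\ref{T-gain-bias} only says that edges attaining the lexicographic optimum yield optimal strategies. It does not say that every optimal strategy uses value-preserving edges, and the paper itself notes that not every optimal pair is certified by a single bias vector. Deriving the second claim from that theorem would therefore need an extra argument, which is essentially your sandwich-and-shift step.
\end{itemize}

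The paper's route is shorter only if Theorem~\ref{T-gain-bias} is taken for granted.
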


Our reductions do not rely on the following theorem. (See, e.g., \cite{GuhaJKT16}.) It is useful, however, for understanding the general picture. We refer to the theorem in Sections~\ref{S-examples} and~\ref{S-concl}. The theorem uses the standard \emph{lexicographic} ordering on $\RR^2$: $(x,y)\le (x',y')$ if and only if $x<x'$ or $x=x'$ and $y\le y'$.

\begin{theorem}\label{T-gain-bias}
    Let $\MPGG$ be an MPG. Then, the following set of equations, known as the \emph{Lexicographic Bellman Equations}, in the unknowns $g(u)$, $b(u)$, for every $u\in V$, has a solution:
    \begin{align*}
    (\,g(u)\,,\,b(u)\,) &\EQ \min_{(u\TO v)\in E} (\,g(v)\,,\,c(u\TO v)-g(v)+b(v)\,) \quad,\quad u\in V_0 \;, \\
    (\,g(u)\,,\,b(u)\,) &\EQ \max_{(u\TO v)\in E} (\,g(v)\,,\,c(u\TO v)-g(v)+b(v)\,) \quad,\quad u\in V_1 \;.
    \end{align*}
    The $\min$ and $\max$ are taken with respect to the lexicographic order. For every solution $(g(u),b(u))$, for $u\in V$, of this set of equations we have:
    \begin{enumerate}
        \item $val_G(u)=g(u)$, for every $u\in V$.
        \item if $\sigma_0$ is a strategy of player~$0$ such that $\sigma_0(u)=(u\TO v)$ is one of the outgoing edges of~$u$ that achieves the minimum in the equation corresponding to~$u$, for every $u\in V_0$, then $\sigma_0$ is an optimal strategy for player~$0$ in~$G$.
        \item if $\sigma_1$ is a strategy of player~$1$ such that $\sigma_1(u)=(u\TO v)$ is one of the outgoing edges of~$u$ that achieves the maximum in the equation corresponding to~$u$, for every $u\in V_1$, then $\sigma_1$ is an optimal strategy for player~$1$ in~$G$.
    \end{enumerate}
\end{theorem}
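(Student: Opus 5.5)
The proof has two parts. First, any solution $(g,b)$ forces $g=val_G$ and makes the greedy strategies optimal. Second, a solution exists, which I get from a Laurent (Blackwell) expansion of discounted values.

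\emph{Properties 1--3.} Let $(g,b)$ solve the Lexicographic Bellman Equations. Let $\sigma_0$ pick, at every $u\in V_0$, an edge attaining the lexicographic minimum, and let $\sigma_1'$ be any positional strategy of player~$1$. The equations give the following along every edge $(u\TO v)$ of $play^{\sigma_0,\sigma'_1}(u)$:
\begin{itemize}
\item If $u\in V_0$, then $g(u)=g(v)$ and $b(u)=c(u\TO v)-g(v)+b(v)$.
\item If $u\in V_1$, then $(g(u),b(u))\ge (g(v),c(u\TO v)-g(v)+b(v))$ lexicographically. Hence $g(u)\ge g(v)$, and when equality holds, $b(u)\ge c(u\TO v)-g(v)+b(v)$.
\end{itemize}
So $g$ is non-increasing along the play. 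By Lemma~\ref{L-out} the play ends in a simple cycle, and $g$ must be constant on that cycle, say equal to $\gamma\le g(u)$. On every edge of the cycle we then have $c(x\TO y)\le \gamma + b(x)-b(y)$. Summing around the cycle telescopes the $b$ terms, so the mean cost of the cycle is at most $\gamma\le g(u)$.

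Hence $\max_{\sigma_1'} out^{\sigma_0,\sigma_1'}(u)\le g(u)$. By Theorem~\ref{T-MPG}, positional counter-strategies suffice, so this bound holds for all strategies. The symmetric argument with the lexicographic maximum gives a strategy $\sigma_1$ guaranteeing at least $g(u)$. Therefore $g=val_G$ and both strategies are optimal.

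\emph{Existence.} For $\beta\in(0,1)$, consider the discounted game with equations
\[ v_\beta(u)=\min_{(u\TO v)\in E}\bigl(c(u\TO v)+\beta v_\beta(v)\bigr) \text{ for } u\in V_0,\qquad v_\beta(u)=\max_{(u\TO v)\in E}\bigl(c(u\TO v)+\beta v_\beta(v)\bigr) \text{ for } u\in V_1. \]
This system has a unique solution by contraction, and its optimal strategies are the greedy ones.

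For a fixed positional pair $(\sigma_0,\sigma_1)$, the discounted value is $(I-\beta P)^{-1}c$, where $P$ is the 0/1 transition matrix of the pair. Each entry is a rational function of $\beta$. Only finitely many strategy pairs exist, and any two rational functions either coincide or cross finitely often. So there is a pair $(\sigma_0^*,\sigma_1^*)$ that is optimal in the discounted game for all $\beta$ in some interval $(\beta_0,1)$; this is the Blackwell-optimal pair.

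On that interval each $v_\beta(u)$ is a single rational function with at most a simple pole at $1$. Concretely, the play is a path followed by a cycle, which gives the expansion
\[ v_\beta(u)=\frac{g(u)}{1-\beta}+b(u)+O(1-\beta). \]
Substituting, $\beta v_\beta(v)=\frac{g(v)}{1-\beta}-g(v)+b(v)+O(1-\beta)$. Therefore
\[ c(u\TO v)+\beta v_\beta(v)=\frac{g(v)}{1-\beta}+\bigl(c(u\TO v)-g(v)+b(v)\bigr)+O(1-\beta), \]
which is exactly the pair appearing in the Lexicographic Bellman Equations.

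As $\beta\to1$, comparing two such expressions by size is the same as comparing their coefficient pairs lexicographically, whenever those pairs differ. The discounted equations hold for all $\beta$ near $1$. Matching the coefficients of $1/(1-\beta)$ and of $(1-\beta)^0$ on the optimal edge, and using the inequality against every other edge, yields exactly the lexicographic equations.

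\emph{Main obstacle.} The delicate step is the Blackwell stabilization together with the exact form of the Laurent expansion. One must argue carefully that the comparison between edges becomes lexicographic in the limit. Ties in the first two coefficients are harmless, since then both sides agree at the level the equations require. I would handle this by writing each candidate edge term as an explicit rational function in $\beta$ and comparing leading terms.
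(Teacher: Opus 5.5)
The paper gives no proof of this theorem. It quotes the result as background, with a pointer to \cite{GuhaJKT16}, and says its reductions do not depend on it, so there is no in-paper argument to compare yours with. Your proof is correct and follows the classical route.

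The verification half is the standard potential argument. Along any play consistent with a lexicographically greedy $\sigma_0$, $g$ is non-increasing, so it is constant on the terminal cycle. There the second coordinate of the equations becomes $c(x\TO y)\le \gamma+b(x)-b(y)$, which telescopes. Together with the symmetric bound for player~$1$, this gives items 1--3 at once.

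The existence half is Shapley's discounted game plus Blackwell's Laurent expansion $v_\beta=g/(1-\beta)+b+O(1-\beta)$. Your identity $c(u\TO v)+\beta v_\beta(v)=g(v)/(1-\beta)+(c(u\TO v)-g(v)+b(v))+O(1-\beta)$ shows exactly why the lexicographic order on the pair appears. Your route buys a self-contained proof, at the cost of discounted-game machinery the paper otherwise never uses.

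Two steps can be tightened.

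\begin{enumerate}
\item You do not need Theorem~\ref{T-MPG} to pass from positional to arbitrary counter-strategies. On an arbitrary play consistent with $\sigma_0$, $g$ takes finitely many values and is therefore eventually constant. Since $b$ is bounded, the telescoping bound then gives a $\limsup$ average of at most that constant directly.
\item Stabilization on a whole interval $(\beta_0,1)$ is more than you need. By pigeonhole, some fixed pair $\sigma^*$ is greedy for a sequence $\beta_k\to 1$; by uniqueness of the solution of $x=c_{\sigma^*}+\beta P_{\sigma^*}x$, its value vector equals the fixed point $v_{\beta_k}$. The identity on the chosen edges holds for \emph{all} $\beta$, because it is the defining linear equation of $v^{\sigma^*}_\beta$. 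The Bellman inequalities at $\beta_k$ pass to the limit after multiplying by $1-\beta_k$, and, when the leading coefficients tie, after cancelling them. This yields the lexicographic inequalities with no sign analysis of rational functions.
\end{enumerate}

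If you keep the interval version, say explicitly that optimality of a fixed pair at $\beta$ is a finite conjunction of sign conditions on rational functions of $\beta$. Each such sign is eventually constant as $\beta\to 1$.
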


The values $g(u)$ are known as \emph{gains}, while the values $b(u)$ are known as \emph{biases} or \emph{potentials}. Theorem~\ref{T-gain-bias} says that in any solution of the equations, the gains must equal the values of the vertices, and are thus uniquely determined. The biases, however, are not uniquely determined by the equations, even beyond the fact that an arbitrary constant can be added to all the biases. It is not known how to solve these non-linear equations in polynomial time.

Theorem~\ref{T-gain-bias} provides a very convenient way of proving that a pair of strategies $\sigma_0$ and~$\sigma_1$ are optimal for the two players; we simply need to provide appropriate gains and biases. We note however, that not all pairs of optimal strategies can be certified in this way. We may need separate biases for player~$0$ and player~$1$ to prove the optimality of an arbitrary pair of optimal strategies of the two players. (This requires a slight modification of Theorem~\ref{T-gain-bias} that we do not state here. We could, of course, always use Corollary~\ref{C-poly} that does not rely on Theorem~\ref{T-gain-bias}.)

It is also interesting to note that given both the values/gains \emph{and} the biases of all vertices in some solution of the Lexicographic Bellman equations, optimal strategies for both players can be easily recovered. Our reductions show that knowing only the values/gains is not enough, unless MPGs can be solved in polynomial time.

\section{Overview}\label{S-overview}

We prove Theorem~\ref{T1} using the following chain of reductions:
\[ \Solve \longrightarrow \ZeroRecover \longrightarrow \FindOther \longrightarrow \Recover \longrightarrow \Solve\]
The main challenge, to which most of the paper, and in particular Section~\ref{S-first}, is devoted to, is the reduction $\Solve \longrightarrow \ZeroRecover$. The reduction $\ZeroRecover \longrightarrow \FindOther$, described in Section~\ref{S-second}, is easier but also non-trivial. 

The reduction $\FindOther \longrightarrow \Recover$ is trivial. The input to \FindOther\ is a game~$G$ and an optimal strategy $\sigma_0$ of player~$0$. By Corollary~\ref{C-values}, we can use $\sigma_0$ to compute all values $val_G(u)$, for $u\in V$, in polynomial time. We can then call $\Recover(G,val_G)$ to obtain a pair of optimal strategies, and in particular an optimal strategy $\sigma_1$ of player~$1$.

The reduction $\Recover \longrightarrow \Solve$ is even more trivial. We simply ignore the given values and call $\Solve(G)$ to obtain a pair of optimal strategies.

A reduction from $\ZeroRecover$ to $\Recover$ is also trivial, as $\ZeroRecover$ is essentially a subproblem of $\Recover$. Thus, if we only want to show that $\Solve \longrightarrow \Recover$, we can use the chain $\Solve \longrightarrow \ZeroRecover \longrightarrow \Recover$, without having to rely on the reduction $\ZeroRecover \longrightarrow \FindOther$. A simple, but nontrivial, reduction in the opposite direction, i.e., from \Recover\ to \ZeroRecover, is given in Section~\ref{S-ZeroRecover}.

In the rest of this section we give an overview of the reduction $\Solve \longrightarrow \ZeroRecover$ which, as mentioned, is the most challenging reduction in the paper.

Suppose that $(\sigma_0,\sigma_1)\gets \ZeroRecover(G_0)$ is an oracle that given an MPG~$G_0$, along with a \emph{promise} that all values in~$G_0$ are~$0$, returns a pair $(\sigma_0,\sigma_1)$ of optimal strategies for the two players in~$G_0$. We show that using a polynomial number of calls to such an oracle we can solve an arbitrary MPG from scratch, i.e., implement an algorithm $(val_G,\sigma_0,\sigma_1)\gets \Solve(G)$ that given an MPG~$G$ returns the values of all its vertices and a pair of optimal strategies for the two players. We thus obtain a Turing reduction from \Solve\ to \ZeroRecover.

The reduction uses the following three transformations:
\begin{enumerate}
    \item Given an MPG $G$ we can construct in polynomial time an MPG $\bG$ such that the values of all vertices in~$\bG$ are equal to the maximal value $\mu_{\max}$ of a vertex in~$G$. This is done without knowing the values in~$G$. The game~$\bG$ is obtained by splitting each edge of~$G$ and introducing various \emph{jump} edges.
    \item Given an MPG $G$ we can construct in polynomial time an MPG $-G$, on the same vertex set as~$G$, such that $val_{-G}(v)=-val_G(v)$, for every vertex~$v$ of~$G$. (This construction is easy.)
    \item Given an MPG $G$ such that all vertices of~$G$ have value~$\mu$, and an MPG $H$ such that all vertices of~$H$ have value~$\nu$, we can construct an MPG $G\times H$ such that all vertices of $G\times H$ have value $\frac{\mu+\nu}{2}$. This is done without knowing $\mu$ and $\nu$. $G\times H$ is essentially the \emph{product} of $G$ and $H$, hence the notation $G\times H$. (The construction is actually slightly more general, see below.)
\end{enumerate}

The reduction is then composed of the following steps:
\begin{enumerate}
    \item Given an MPG $G$, use the first transformation to obtain an MPG $\bG$ in which the values of all vertices are equal to the maximal, and unknown, value $\mu_{\max}$ of a vertex in~$G$.
    \item Use the second transformation to obtain $-\bG$.
    \item Use the third transformation to obtain $\hG = \bG\times(-\bG)$. The crucial point is that all vertices of~$\hG$ have value~$0$.
    \item We can now call $(\hs_0,\hs_1)\gets \ZeroRecover(\hG)$ to obtain a pair of optimal strategies for $\hG$. We would like to use $\hs_0$ and $\hs_1$ to obtain optimal strategies $\bar{\sigma}_0$ and $\bar{\sigma}_1$ for the two players in~$\bG$. Unfortunately, it is not clear how to do it directly. Instead, we use a \emph{self-reducibility} approach that repeatedly calls $\ZeroRecover(\hG_i)$, where the $\hG_i$'s are subgames of $\hG$, i.e., games obtained by removing some of the edges of~$\hG$, until eventually finding the desired optimal strategies $\bar{\sigma}_0$ and $\bar{\sigma}_1$ for the two players in~$\bG$. (This is why we get a Turing reduction and not a Karp many-one reduction.)

    \item Our final challenge is to use the optimal strategies $\bar{\sigma}_0$ and $\bar{\sigma}_1$ for the two players in~$\bG$ to obtain optimal strategies $\sigma_0$ and $\sigma_1$ for the two players in the original game~$G$. This is again a non-trivial task. It is achieved using a \emph{peeling} process. The peeling process repeats the above four steps with various subgames of~$G$. The whole process only uses calls to $\ZeroRecover$.
\end{enumerate}

\section{Game transformations}\label{S-transform}

\subsection{Making all values equal}\label{sub-All-Equal}

\newcommand{\umax}{u_{\max}}

We begin with a simple transformation that takes an MPG $G=(V,V_0,V_1,E,c)$ with maximum value $\mu_{\max} = \max_{u\in V} val_G(u)$ and converts it into an MPG $\bG=(\bV,V_0,\bar{V}_1,\bar{E},\bc)$ such that $val_{\bG}(u)=\mu_{\max}$, for every $u\in \bV$, i.e., each vertex of $\bG$ has the maximal value~$\mu_{\max}$. The transformation does not need to know~$\mu_{\max}$. 

The game $\bG$ is obtained by replacing each edge $(u\TO v)\in E$ by two edges $u\TO uv$ and $uv\TO v$, where~$uv$ is a new vertex belonging to \MAX, as shown in Figure~\ref{fig:split}. We set $\bc(u\TO uv)=\bc(uv\TO v)=c(u\TO v)$. We let $\bV=V_0\cup \bV_1$, where $\bV_1=V_1\cup \{uv \mid (u\TO v)\in E\}$.
In other words, on every edge $(u\TO v)\in E$ we put an intermediate vertex $uv$ controlled by \MAX.
From each new vertex $uv$, we add edges $uv\TO w$ for every $w\in V\setminus\{v\}$, with $\bc(uv,w)=-M$, where $M$ is a sufficiently large number. (It is sufficient to let $M>4n\cdot\max_{(u,v)\in E} |c(u\TO v)|$, where $n=|V|$ is the number of vertices in the game.) In other words, from each such intermediate vertex we allow \MAX\ to jump to any original vertex of the graph. But, this jump has a very negative cost associated with it. The newly added edges are thus called \emph{jump} edges.

It is not difficult to see that in the new game $\bG$, the values of all vertices are equal to~$\mu_{\max}$, the maximal value of a vertex in~$G$. This is because \MAX\ can always choose to jump to a vertex~$u_{\max}$ such that $val_G(\umax)=\mu_{\max}$ and then follow an optimal strategy in~$G$. The cost of this initial jump is very negative, but as only one such jump is needed, it does not affect the value of the game. (This depends crucially on the fact that the limiting-average criterion is used.) The fact that the cost of jumping is so negative means that \MAX\ cannot benefit from jumping more than once. As all jumps are controlled by \MAX, all that \MIN\ can do is ensure that the value is at most~$\mu_{\max}$, by following her optimal strategy in the original game. We next formalize this argument.

\begin{figure}[t]
    \centering
    \includegraphics[width=0.2\linewidth,valign=c]{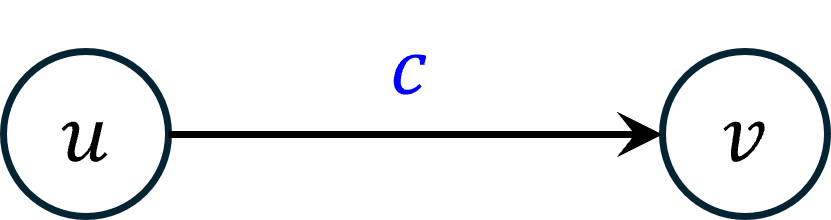}
    \hspace*{2cm}
    \includegraphics[width=0.2\linewidth,valign=c]{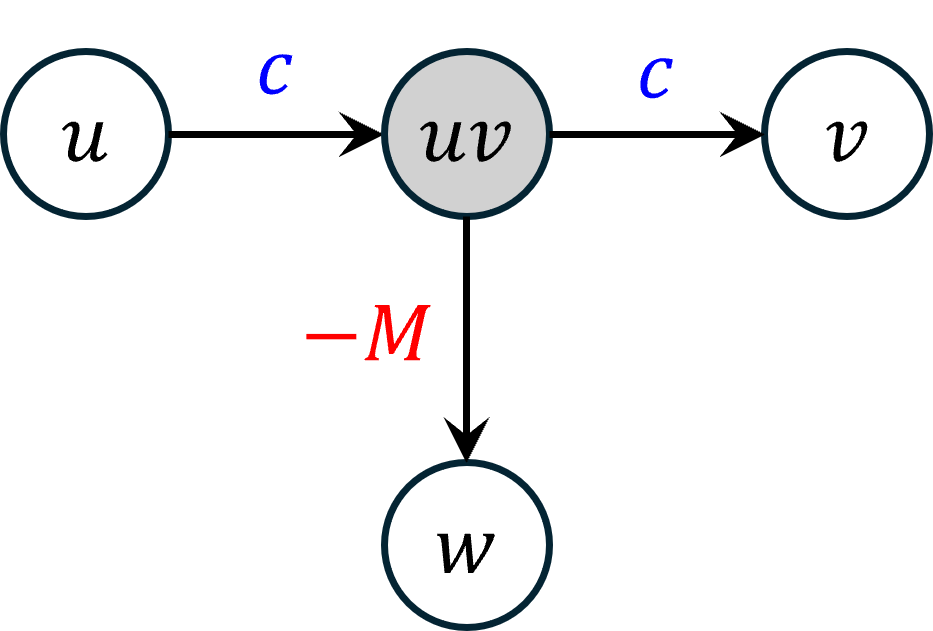}
    \caption{Splitting edges and introducing jump edges. Every edge $u\TO v$ of the original game~$G$ is split into two edges $u\TO uv$ and $uv\TO v$, where $uv$ is a newly introduced vertex belonging to \MAX. The costs of these two edges are set to $c(u\TO v)$. From $uv$ there are \emph{jump} edges of cost $-M$, where~$M$ is a sufficiently large number, to all original vertices of~$G$, except~$v$.}
    \label{fig:split}
\end{figure}

\begin{lemma}\label{L-equate}
Let $G=(V,V_0,V_1,E,c)$ be an MPG with maximum value $\mu_{\max} = \max_{u\in V} val_G(u)$ and let $\bG=(\bV,V_0,\bV_1,\bE,\bc)$ be the MPG constructed above. Then, $val_{\bG}(u)=\mu_{\max}$, for every $u\in \bV$.
\end{lemma}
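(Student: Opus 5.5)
We prove the two inequalities $val_{\bG}(u)\ge\mu_{\max}$ and $val_{\bG}(u)\le\mu_{\max}$ separately, for every $u\in\bV$, by exhibiting positional strategies for each player and bounding the outcomes against arbitrary positional counter strategies. By Lemma~\ref{L-out}, each such outcome is the mean cost of a simple cycle of $\bG$, and Theorem~\ref{T-MPG} lets us restrict attention to positional strategies.

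\emph{Lower bound.} Fix positional optimal strategies $\sigma_0^*,\sigma_1^*$ of $G$ and a vertex $\umax$ with $val_G(\umax)=\mu_{\max}$. Define $\bs_1$ in $\bG$ as follows. For $u\in V_1$ with $\sigma_1^*(u)=(u\TO v)$, move $u\TO uv$. For an intermediate vertex $uv$ with $val_G(v)=\mu_{\max}$, move $uv\TO v$. For an intermediate vertex $uv$ with $val_G(v)<\mu_{\max}$, jump to $\umax$ (if $v=\umax$ the move $uv\TO v$ already has maximal value, so this case does not arise).

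Against any positional $\bs_0$, the play takes at most one jump. After that it stays in the set $W=\{w\in V: val_G(w)=\mu_{\max}\}$ together with the intermediate vertices on edges inside $W$. Indeed, $\sigma_1^*$ keeps player~$1$ inside $W$ by Lemma~\ref{L-equal}, and whenever player~$0$ leaves $W$, player~$1$ jumps back. Hence the terminal cycle contains no jump edges, or else it contains a jump. I would argue that a jump lands at $\umax$, and since the play from $\umax$ is again governed by the same positional strategies, a cycle containing a jump is possible only if player~$0$ repeatedly exits $W$.

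To handle this cleanly, I would instead choose player~$1$'s strategy so that jumps occur only from intermediate vertices $uv$ with $u\in W$ and $v\notin W$. Such an edge has $u\in V_0$, since player~$1$ never leaves $W$. Each such cycle pays $-M$ per exit, so the cycle mean could drop. This is the main obstacle: player~$0$ can force repeated jumps, and the resulting cycles contain jump edges of cost $-M$.

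The resolution is that player~$0$ leaving $W$ must be forced or profitable for player~$1$. I would redefine: at $uv$ with $val_G(v)<\mu_{\max}$, player~$1$ does not jump but continues to $v$ and then plays $\sigma_1^*$? That lowers the value. Better: note that for $u\in V_0\cap W$, every edge $(u\TO v)$ has $val_G(v)\ge val_G(u)=\mu_{\max}$ by Lemma~\ref{L-equal}, so $v\in W$. Hence player~$0$ can never leave $W$, and no jump ever occurs after entering $W$. The terminal cycle therefore consists of split edges inside $W$. Its mean equals the mean of the corresponding cycle in $G$ (each original edge is doubled with the same cost), which is at least $\mu_{\max}$ because $\sigma_1^*$ is optimal in $G[W]$ from vertices of value $\mu_{\max}$. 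This gives $val_{\bG}(u)\ge\mu_{\max}$.

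\emph{Upper bound.} Let $\bs_0=\sigma_0^*$, lifted to $\bG$ in the obvious way. Fix a positional $\bs_1$ and consider its terminal cycle $C$. If $C$ has no jump edge, it is the doubling of a cycle of $G$ consistent with $\sigma_0^*$, so its mean is at most $\mu_{\max}$. If $C$ contains $k\ge1$ jump edges, split $C$ at the jumps into $k$ segments. Each segment is a path of $G$ consistent with $\sigma_0^*$, with its split edges doubled. The key estimate is that a $\sigma_0^*$-consistent path of length $\ell$ in $G$ has cost at most $\ell\mu_{\max}+2nW_{\max}$, where $W_{\max}=\max|c|$. This follows by decomposing the path into simple cycles, each of mean at most $\mu_{\max}$ (they are $\sigma_0^*$-consistent cycles, so by Lemma~\ref{L-out} and optimality their means are at most the value of their vertices), plus a simple path of at most $n$ edges. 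Summing over the segments, the total cost of $C$ is at most $|C|\mu_{\max}+k(4nW_{\max}-M)$, with a little slack for the jump edges' own lengths in $|C|$ (using $\mu_{\max}\ge -W_{\max}$). Since $M>4nW_{\max}$, the cycle mean is below $\mu_{\max}$. Thus $val_{\bG}(u)\le\mu_{\max}$.

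The main technical care lies in this cycle-decomposition bound and in tracking the doubled costs and the lengths of the jump edges.
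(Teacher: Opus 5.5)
Your proof is correct, and your lower bound is the paper's argument. You use the same lift of $\sigma_1^*$, jumping to a maximal-value vertex from each intermediate vertex $uv$ with $val_G(v)<\mu_{\max}$. You then use Lemma~\ref{L-equal} to see that once the play is in $W$, neither player can leave it, so at most one jump ever occurs. In a write-up, drop the middle detour. The ``obstacle'' that player~$0$ can force repeated jumps does not exist, for exactly the reason you end with, and several sentences there are false as stated.

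Your upper bound takes a different route from the paper.
\begin{itemize}
\item The paper only observes that, with both strategies positional, the terminal cycle is a \emph{simple} cycle of $\bG$. Since $\bG$ alternates between original and intermediate vertices, this cycle has length $2k\le 2n$. So a single edge of cost $-M$ with $M>4nW_{\max}$ already pushes its mean below $-W_{\max}\le\mu_{\max}$, and no property of $\sigma_0^*$ is needed in that case.
\item You instead cut the cycle at its jumps and bound each jump-free segment through a cycle decomposition of $\sigma_0^*$-consistent paths in $G$. This never uses simplicity, so it is more robust. It essentially shows that the lift of $\sigma_0^*$ keeps the cost of any walk within $O(nW_{\max})$ of $\mu_{\max}$ times its length, with each $-M$ absorbing one such error. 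It would therefore handle non-positional strategies of player~$1$ directly.
\end{itemize}

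The price is a heavier argument, and your bookkeeping does not quite close at the paper's threshold. Doubling $\ell\mu_{\max}+2nW_{\max}$, adding the last half-edge ($\le W_{\max}$) and the correction $-2\mu_{\max}\le 2W_{\max}$ leaves an excess of $(4n+3)W_{\max}-M$ per segment. This need not be negative when only $M>4nW_{\max}$ is assumed. You can fix it in either of two ways:
\begin{itemize}
\item Bound the leftover simple path by $n-1$ edges, which gives an excess of $(4n-1)W_{\max}-M<0$.
\item Note that each segment of a simple cycle is already a simple path in $G$, so no decomposition is needed.
\end{itemize}
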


\begin{proof}
  We begin by describing a strategy of \MAX\ that achieves a value of at least $\mu=\mu_{\max}$ from every vertex. The strategy is simple. Let $V_{\max}=\{ u\in V \mid val_G(u)=\mu \}$. Let $\sigma_1\in \Sigma^P_1$ be an optimal strategy of \MAX\ in~$G$.
  Let $v_{\max}\in V_{\max}$ be an arbitrary vertex.
  Extend $\sigma_1$ into a strategy~$\bs_1$ of \MAX\ in~$\bG$ in the following way: If $u\in V_1\subseteq \bV_1$ and $\sigma_1(u)=(u\TO v)$, then $\bs_1(u)=(u\TO uv)$. If $uv\in \bV_1\setminus V_1$, let
  \[ \bs_1(uv) \EQ \begin{cases}  uv\TO v & \text{ if } v\in V_{\max}\\
   uv\TO v_{\max} & \text{ if } v\not\in V_{\max}
  \end{cases}
  \]

  Let $\bs_0$ be an arbitrary strategy of \MIN\ in~$\bG$, and let $\sigma_0$ be the strategy of player~$0$ in~$G$ defined by $\sigma_0(u)=(u\TO v)$, if $\bs_0(u)=(u\TO uv)$, for every $u\in V_0$. We show that for every $u\in \bV$ we have $out^{\bs_0,\bs_1}_{\bG}(u)\ge \mu$. Suppose at first that $u\in V$ is also a vertex of~$G$. If $u\in V_{\max}$, then the infinite path obtained when \MIN\ and \MAX\ play according to $\bs_0$ and $\bs_1$ in~$\bG$ is identical to the play obtained when they play $\sigma_0$ and $\sigma_1$ in $G$, with each edge $u\TO v$ in~$G$ broken into the two new edges $u\TO uv$ and $uv\TO v$. This follows by Lemma~\ref{L-equal} as all original vertices encountered have value~$\mu$ and thus no jump edges are used. If $u\not\in V_{\max}$, $u\in V_i$, and $\bs_i(u)=(u\TO uv)$, where $i\in\{0,1\}$, then either $\bs_1(uv)=(uv\TO v)$ and $v\in V_{\max}$, or $\bs_1(uv)=(uv\TO v_{\max})$. In both cases, the play enters a vertex of~$V_{\max}$ and becomes equivalent to a play in~$G$ in which player~$1$ uses the optimal strategy~$\sigma_1$. Thus, the outcome in all these cases is at least~$\mu$. The same holds for all vertices $uv\in \bV_1\setminus V_1$.

  Let $\sigma_0$ be an optimal strategy of player~$0$ in~$G$. Let $\bs_0$ be the strategy of player~$0$ in~$\bG$ defined by $\bs_0(u)=(u\TO uv)$ whenever $\sigma_0(u)=(u\TO v)$, for every $u\in V_0$. Let $\bs_1$ be an arbitrary strategy of player~$1$ in~$\bG$. Let $u\in \bV$ and consider the cycle in $play_{\bG}^{\bs_0,\bs_1}(u)$. If the cycle contains a jump edge of weight $-M$, then the average weight of the cycle is smaller than any original edge weight in the graph. In particular, it is smaller than~$\mu$. Otherwise, the cycle corresponds to a cycle that can be formed in~$G$ when player~$0$ is using~$\sigma_0$, and its average weight is again at most $\mu$.

  Combining these two facts, we get the value of each vertex in~$\bG$ is exactly $\mu=\mu_{\max}$.  
\end{proof}

\subsection{Negating a game}\label{sub-Negating}

There is an obvious transformation that takes an MPG $\MPGG$ and converts it into an MPG $-G=(V,V_1,V_0,E,-c)$, i.e., a game in which the roles of the two players are switched and all costs are negated. The following lemma is then immediate:

\begin{lemma}\label{L-negate} Let $\MPGG$ be an MPG and let $-G=(V,V_1,V_0,E,-c)$. Then, $val_{-G}(u)=-val_G(u)$, for every $u\in V$.
\end{lemma}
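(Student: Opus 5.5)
The plan is to exhibit an explicit bijection between strategies in $G$ and in $-G$ and show that it negates outcomes. Because the vertex set and edge set are unchanged and only the ownership of vertices is swapped, a positional strategy $\sigma_0$ of player~$0$ in~$G$ (a choice of outgoing edge at each $u\in V_0$) is literally a positional strategy of player~$1$ in~$-G$, since $V_0$ is the set of player~$1$'s vertices in~$-G$. Symmetrically, a positional strategy $\sigma_1$ of player~$1$ in~$G$ is a positional strategy of player~$0$ in~$-G$. The same identification holds for arbitrary, history-dependent strategies, because the sets of finite walks are identical in the two games.

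Next I will compare outcomes. Fix $u\in V$ and strategies $\sigma_0,\sigma_1$ as above. The walk generated in~$-G$ when player~$1$ uses $\sigma_0$ and player~$0$ uses $\sigma_1$ is exactly $play^{\sigma_0,\sigma_1}_G(u)$, since the same edge is chosen at every vertex. Its costs under $-c$ are the negations of its costs under~$c$. For positional strategies, Lemma~\ref{L-out} shows the outcome is the average cost of the cycle eventually repeated, so the outcome in $-G$ is exactly $-out^{\sigma_0,\sigma_1}_G(u)$. This sidesteps the $\limsup$ versus $\liminf$ issue; for general strategies negation would turn a $\limsup$ into a $\liminf$, which is why I would work only with positional strategies.

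Finally, I will apply Theorem~\ref{T-MPG} to~$-G$. That theorem expresses $val_{-G}(u)$ as $\min$ over player~$0$'s positional strategies in~$-G$ of $\max$ over player~$1$'s positional strategies. Under the identification, player~$0$ of~$-G$ plays the role of player~$1$ of~$G$, so this becomes
\[ val_{-G}(u) \EQ \min_{\sigma_1\in\Sigma_1^P}\max_{\sigma_0\in\Sigma_0^P}\bigl(-out^{\sigma_0,\sigma_1}_G(u)\bigr) \EQ -\max_{\sigma_1\in\Sigma_1^P}\min_{\sigma_0\in\Sigma_0^P} out^{\sigma_0,\sigma_1}_G(u) \EQ -val_G(u), \]
where the last equality is the max-min form in Theorem~\ref{T-MPG} for~$G$. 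As a by-product, the same computation shows that optimal positional strategies of the two players in~$G$ are optimal in~$-G$ with the roles exchanged. No step is a real obstacle; the only point needing care is the remark above that restricting to positional strategies makes the outcome exactly negate.
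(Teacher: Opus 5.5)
Your proof is correct, and it is the routine argument the paper leaves implicit when it calls Lemma~\ref{L-negate} immediate (no proof is given there): identify strategies across the swap of $V_0$ and $V_1$, use Lemma~\ref{L-out} to see that positional outcomes are negated, and turn the min--max expression of Theorem~\ref{T-MPG} for $-G$ into the max--min expression for $G$. Your closing by-product is immediate relative to positional opponents, which is the paper's standing convention, but against history-dependent opponents the $\limsup$/$\liminf$ asymmetry you point out means an optimal strategy of player~$1$ in $G$ needs a short extra argument (for example, a cycle decomposition of long walks) to be optimal for player~$0$ in $-G$; this is not needed for the lemma itself.
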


\subsection{Averaging (or multiplying) two games}\label{sub-Averaging}

Given two MPGs $G$ and $H$, we construct an MPG $G\times H$ that essentially corresponds to playing~$G$ and~$H$ alternatingly and obtaining the average of the corresponding values. In the stochastic case, Mamino \cite{Mamino17} describes a simple way of achieving a similar effect by essentially taking the \emph{union} of~$G$ and~$H$, with some probabilistic transitions between them. In the deterministic case, we do not know how to achieve this using a union. Instead, we take the \emph{product} of the two games. The size of the resulting game is still polynomial.

\begin{figure}[t]
    \centering
    \includegraphics[width=0.2\linewidth]{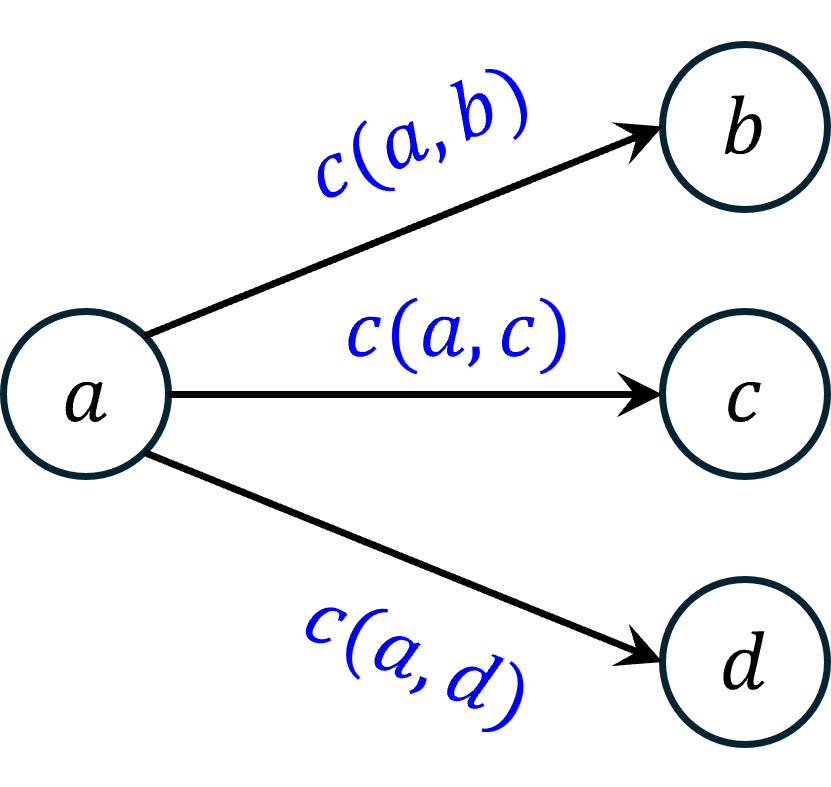}
    \hspace*{0.8cm}
    \includegraphics[width=0.2\linewidth]{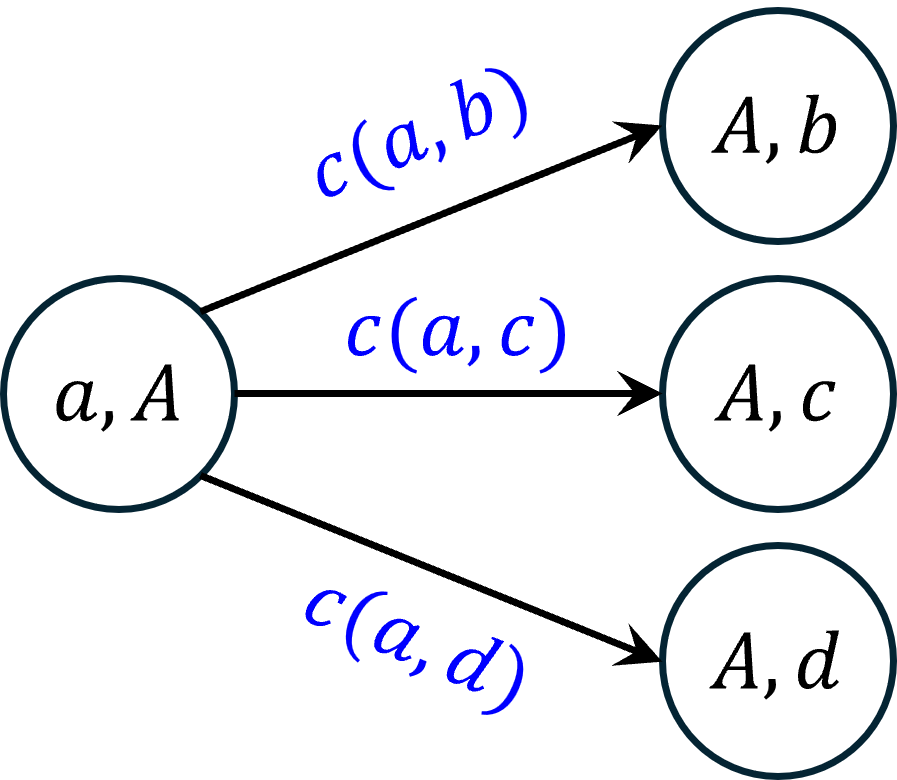}
    \hspace*{0.8cm}
    \includegraphics[width=0.2\linewidth]{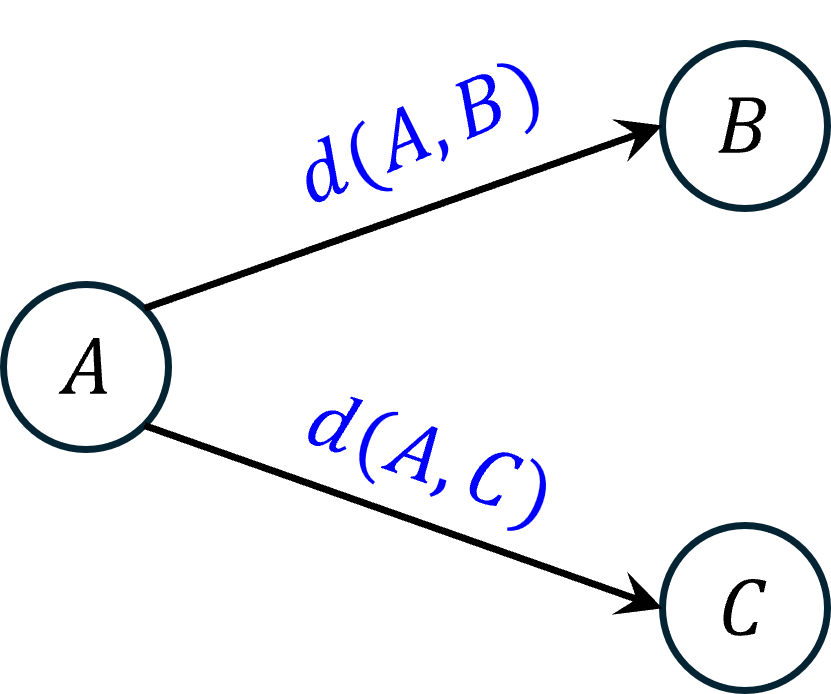}
    \hspace*{0.8cm}
    \includegraphics[width=0.2\linewidth]{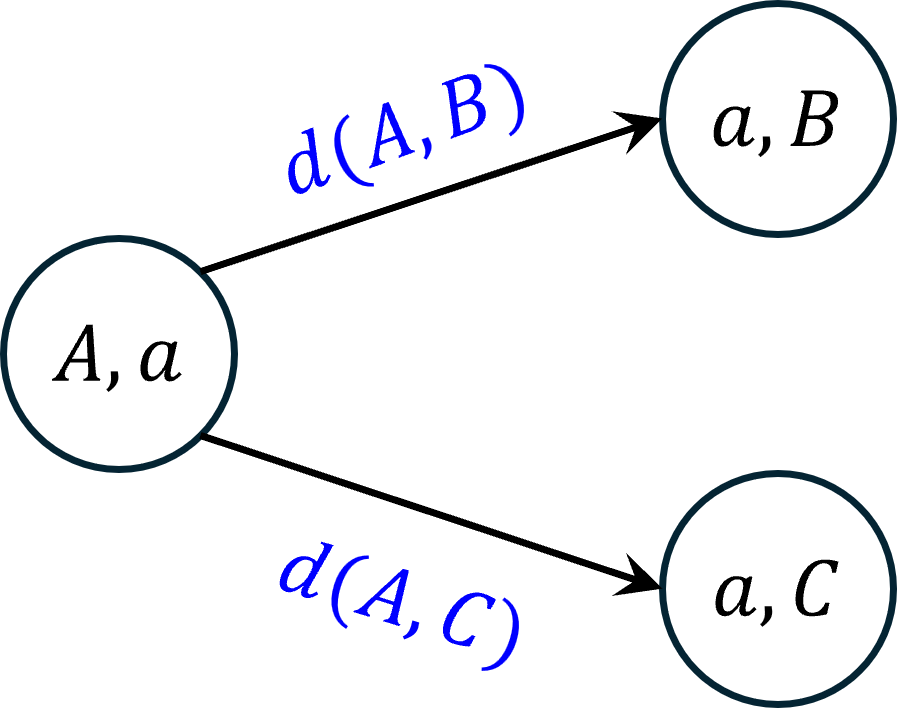}
    \caption{Forming the product of two games. If $a$ is a vertex of $G$ with three outgoing edges $a\TO b$, $a\TO c$ and $a\TO d$, and~$A$ is a vertex of~$H$, then $G\times H$ contains a vertex $(a,A)$ with outgoing edges to $(A,b)$, $(A,c)$ and $(A,d)$. The costs of these edges are identical to the costs of the original edges in~$G$. Vertex $(a,A)$ belongs to player~$0$ in $G\times H$ if and only if $a$ belongs to player~$0$ in~$G$. Similarly, if~$A$ is a vertex of~$H$ with two outgoing edges $A\TO B$ and $A\TO C$, and~$a$ is a vertex of~$G$, then $G\times H$ contains a vertex $(A,a)$ with outgoing edges to $(a,B)$ and $(a,C)$. The costs of these edges are identical to the costs of the original edges in~$H$.}
    \label{fig:product}
\end{figure}

In more detail. Let $G=(V,V_0,V_1,E,c)$ and $H=(U,U_0,U_1,F,d)$ be two MPGs. We assume that $V\cap U=\emptyset$. (If this does not hold, we take disjoint copies of~$G$ and~$H$.) The vertex set of $G\times H$ is $(V\times U)\cup (U\times V)$. A vertex in $G\times H$ thus corresponds to a vertex in~$G$ and a vertex in~$H$. A vertex in $V\times U$ indicates that a move in~$G$ should be made next, while a vertex in $U\times V$ indicates that a move in~$H$ should be made next. All edges in $G\times H$ belong to $(V\times U)\times (U\times V) \cup (U\times V)\times (V\times U)$. This ensures that moves in $G$ and in $H$ are performed alternatingly. If $v\TO w$ is an edge of~$G$, then $(v,u)\TO (u,w)$ is an edge of $G\times H$, for every $u\in U$. The weight of all these edges is equal to the weight of $v\TO w$ in~$G$. This corresponds to making a move in~$G$ and signaling that the next move should be in~$H$. Similarly, if $u\TO w$ is an edge of~$H$, then
$(u,v)\TO (v,w)$ is an edge of $G\times H$, for every $v\in V$. The weight of all these edges is equal to the weight of $u\TO w$ in~$H$. All edges in $G\times H$ are of this form. 

\newcommand{\CUP}{\;\cup\;}
\newcommand{\EF}{E\!F}
More formally, we let $G\times H=(W,W_0,W_1,\EF,cd)$ be the MPG defined as follows:
\[\begin{array}{c}
W \EQ (V\times U) \CUP (U\times V) \;, \\[3pt]
 W_0 \EQ (V_0\times U) \CUP (U_0\times V) \;, \\[3pt]
 W_1 \EQ (V_1\times U) \CUP (U_1\times V) \;, \\[3pt]
 \EF \EQ \{ (v,u)\TO (u,w) \mid (v\TO w) \in E \;,\; u\in U\} \CUP \{ (u,v)\TO (v,w) \mid (u\TO w) \in F \;,\; v\in V\} \;.
\end{array}\]
and
\[ cd((v,u)\TO (u,w)) \EQ c(v\TO w) \quad,\quad (v\TO w) \in E \;,\; u\in U \;,\]
\[ cd((u,v)\TO (v,w)) \EQ d(u\TO w) \quad,\quad (u\TO w) \in F \;,\; v\in V \;.\]

Alternatively, we could define the vertex set of $G\times H$ to be $W=V\times U\times\{0,1\}$, where the last component specifies whether the next step should be in~$G$ or in~$H$. We continue with the definition $W = (V\times U) \cup (U\times V)$, assuming that~$U$ and~$V$ are disjoint.

\begin{lemma}\label{L-product}
  Let $G=(V,V_0,V_1,E,c)$ and $H=(U,U_0,U_1,F,d)$ be two MPGs. Then, for every $v\in V$ and $u\in U$ we have
  \[val_{G{\times} H}(v,u) \EQ val_{G{\times} H}(u,v) \EQ\frac{val_{G}(v)+val_{H}(u)}{2} \;. \]
\end{lemma}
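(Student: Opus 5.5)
We prove the lemma by exhibiting, for each player, a positional strategy in $G\times H$ built from optimal strategies in $G$ and $H$, and showing it guarantees the claimed bound from every starting vertex. Fix positional optimal strategies $\sigma_0,\sigma_1$ in $G$ and $\tau_0,\tau_1$ in $H$, given by Theorem~\ref{T-MPG}. In $G\times H$ let player~$1$ play the product strategy $\pi_1$: at $(v,u)$ with $v\in V_1$, if $\sigma_1(v)=(v\TO w)$ move to $(u,w)$; at $(u,v)$ with $u\in U_1$, if $\tau_1(u)=(u\TO w)$ move to $(v,w)$. Define $\pi_0$ analogously from $\sigma_0,\tau_0$.

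The key observation is a projection argument. Any play in $G\times H$ alternates between $V\times U$ and $U\times V$. Its odd-indexed moves, read in the first coordinate when in $V\times U$, trace a walk $v_0\TO v_1\TO\cdots$ in $G$. Its even-indexed moves trace a walk $u_0\TO u_1\TO\cdots$ in $H$. The $G$-moves are chosen by the owner of the current $G$-vertex, and the $H$-moves by the owner of the current $H$-vertex. Under $\pi_1$, player~$1$'s choices in the $G$-walk follow $\sigma_1$ positionally. Player~$0$'s choices in the $G$-walk may depend on the $H$-coordinate, so in $G$ they form a general, history-dependent strategy; that is fine. By the second part of Theorem~\ref{T-MPG}, which allows arbitrary opposing strategies, the $G$-walk has $\liminf$-average at least $val_G(v_0)$ (using the $\liminf$ variant in the footnote, or arguing directly). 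Symmetrically, the $H$-walk has average at least $val_H(u_0)$.

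It remains to combine the two averages. Let $\pi_0'$ be an arbitrary positional strategy for player~$0$. By Lemma~\ref{L-out}, the play under $(\pi_0',\pi_1)$ is eventually periodic with a cycle of even length $2k$. This cycle projects to a closed walk of length $k$ in $G$ and a closed walk of length $k$ in $H$, and these are exactly the repeating parts of the $G$-walk and the $H$-walk. Hence the outcome is $\frac{1}{2}$(mean of the $G$ closed walk $+$ mean of the $H$ closed walk). Each projected walk is eventually periodic, so its limit average exists and equals the mean of its closed walk, which is at least $val_G(v_0)$, respectively $val_H(u_0)$. Therefore the outcome is at least $\frac{val_G(v_0)+val_H(u_0)}{2}$. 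The symmetric argument with $\pi_0$ gives the upper bound against any positional $\pi_1'$. The minmax characterization over positional strategies in Theorem~\ref{T-MPG} then yields the value at $(v,u)$.

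For a start at $(u,v)$, the first move is in $H$ and the $G$-walk starts at $v$; the same argument applies. The step needing the most care is the projection claim itself. Against the product strategy, the opponent's induced behaviour in $G$ is not positional, since it depends on the $H$-coordinate. We therefore must invoke optimality of $\sigma_1$ against \emph{arbitrary} strategies, which the statement of Theorem~\ref{T-MPG} explicitly permits. We must also check that the $G$-projection really is a legal play in $G$ in which player~$1$ follows $\sigma_1$, which holds by construction of $\pi_1$.
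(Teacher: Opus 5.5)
Your proof is correct and follows essentially the same route as the paper. You build product strategies from optimal strategies in $G$ and $H$, and project the eventually periodic play in $G\times H$ onto two interleaved plays in $G$ and $H$ in which the opponent's strategy need not be positional. You then use optimality against arbitrary strategies from Theorem~\ref{T-MPG}, with eventual periodicity replacing the $\limsup$'s by limits.

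The differences are cosmetic. You detail player~$1$'s side rather than player~$0$'s, and you make the final appeal to the positional min--max characterization explicit. Your parenthetical about the $\liminf$ variant is unnecessary, since eventual periodicity already makes the limits exist.
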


\newcommand{\st}{\sigma_0{\times}\tau_0}
\begin{proof} Let $\sigma_0$ and $\tau_0$ be optimal strategies of player~0 in $G$ and $H$, respectively. For every edge $u\TO v$, we let $head(u\TO v)=v$. Define a  strategy $\st$ of \MIN\ in $G\times H$ as follows:
\[ 
\begin{array}{ccl}
\st\,(v,u) \EQ ((v,u)\TO (u,head(\sigma_0(v)))) &\quad,\quad& \text{for every $v\in V_0$ and $u\in U$} \;, \\[4pt]
\st\,(u,v) \EQ ((u,v) \TO (v,head(\tau_0(u)))) &\quad,\quad& \text{for every $u\in U_0$ and $v\in V$} \;. 
\end{array}
\] 

Consider a play in~$G\times H$ in which \MIN\ uses $\st$ while \MAX\ uses an arbitrary positional strategy~$\rho_1$, not necessarily of a product form. It is easy to see that, for a fixed pair $(v,u)\in V\times U$, $play_{G{\times}H}^{\st,\rho_1}(v,u)$ corresponds to the interleaving of two plays in $G$ and $H$ in which \MIN\ uses $\sigma_0$ and $\tau_0$, respectively, while \MAX\ uses some not necessarily positional strategies $\sigma_1$ and $\tau_1$. As $\sigma_0$ and $\tau_0$ are optimal strategies for \MIN\ in~$G$ and~$H$, we get that for every $v\in V$ and $u\in U$, and for every strategy $\rho_1$ used by \MAX,
\[ out_{G{\times}H}^{\st,\rho_1}(v,u)
\EQ \frac{out_{G}^{\sigma_0,\sigma_1}(v)+out_{H}^{\tau_0,\tau_1}(u)}{2}
\LE \frac{val_{G}(v)+val_{H}(u)}{2} \;.  \]
We note that even though $\sigma_1$ and $\tau_1$ are not necessarily positional strategies, $play_{G}^{\sigma_0,\sigma_1}(v)$ and $play_{H}^{\tau_0,\tau_1}(u)$ are eventually periodic because $\st$ and $\rho_1$ are positional and hence $play_{G{\times}H}^{\st,\rho_1}(v,u)$ is eventually periodic. Hence, the $\limsup$'s in the definitions of $out_{G}^{\sigma_0,\sigma_1}(v)$ and $out_{H}^{\tau_0,\tau_1}(u)$ can be replaced by ordinary limits, justifying the equality above. In the same way we also get that
\[ out_{G{\times}H}^{\st,\rho_1}(u,v) \LE \frac{val_{G}(v)+val_{H}(u)}{2} \;.\] 

Similarly, from optimal positional strategies $\sigma_1$ and $\tau_1$ of \MAX\ in $G$ and~$H$, respectively, we can define a positional strategy $\sigma_1\times \tau_1$ for \MAX\ in $G\times H$ that guarantees inequalities in the opposite direction, and the lemma follows.
\end{proof}

The proof of Lemma~\ref{L-product} shows how to convert optimal positional strategies $\sigma_i$ and $\tau_i$ of player~$i$, where $i\in\{0,1\}$, in~$G$ and~$H$, respectively, into an optimal positional strategy $\sigma_i \times \tau_i$ of player~$i$ in $G\times H$. However, there may also exist optimal positional strategies of player~$i$ in $G\times H$ that are \emph{entangled}, i.e., not of this product form, as shown by the examples given in Section~\ref{S-examples}. Thus, an optimal positional strategy~$\rho_i$ of player~$i$ in $G\times H$ does not immediately give us optimal positional strategies for player~$i$ in~$G$ and $H$. To obtain optimal strategies for the two players in $G$ and in $H$, we need to use a self-reducibility approach described in Section~\ref{sub-self}.

\section{Reducing \Solve\ to \ZeroRecover}\label{S-first}

Let $\MPGG$ be an MPG. Let $\bG$ be the game obtained using the transformation of Section~\ref{sub-All-Equal}. By Lemma~\ref{L-equate}, the value of all vertices in~$\bG$ is~$\mu=\mu_{\max}$, the maximum value in~$G$. Let~$-\bG$ be the negation of $\bG$, obtained using the transformation of Section~\ref{sub-Negating}. By Lemma~\ref{L-negate}, the value of all vertices in $-\bG$ is $-\mu$. Finally, let $\hat{G}=\bG\times (-\bG)$, be the product of these two games, obtained using the transformation of Section~\ref{sub-Averaging}. (Since $\bG$ and $-\bG$ are defined on the same set of vertices, we take the product of disjoint copies of $\bG$ and $-\bG$.) The size of $\hat{G}$ is polynomial in the size of~$G$. By Lemma~\ref{L-product}, the value of all vertices in $\hat{G}$ is~$0$.

\subsection{The self-reducibility technique}\label{sub-self}

Suppose that all vertices of~$G$ have value~$\mu$ and that all vertices of~$H$ have value~$-\mu$. All vertices of $G\times H$ thus have value~$0$. The call $(\sigma_0,\sigma_1)\gets \ZeroRecover(G\times H)$ returns a pair of optimal strategies for $G\times H$. Our goal is to obtain a pair of optimal strategies for $G$, and possibly such a pair for~$H$. As explained in Section~\ref{sub-Averaging}, \MAX, for example, has an optimal strategy in $G\times H$ of a product form in which the edge chosen from a vertex $(v,u)$ of $G\times H$ depends only on~$v$. Such an optimal strategy yields immediately an optimal strategy for \MAX\ in~$G$. Unfortunately, as shown by an example given in Section~\ref{S-examples}, an optimal strategy of \MAX\ in $G\times H$ is not necessarily of this form. It is not immediately clear how to extract an optimal strategy for \MAX\ in~$G$ from a general optimal strategy for \MAX\ in $G\times H$.

We can nevertheless recover optimal positional strategies for the two players in the original games~$G$ and~$H$ by making a polynomial number of calls to $\ZeroRecover$. Suppose that $G=(V,V_0,V_1,E,c)$. 
We consider, one by one, the vertices of~$V_1$, i.e., the vertices of~$G$ controlled by \MAX, and for each one of them identify an outgoing edge that appears in an optimal positional strategy for \MAX\ in~$G$, together with previously selected outgoing edges from other vertices of~$V_1$.

Let $u\in V_1$, and let $u\TO v_1$, $u\TO v_2$, \ldots, $u\TO v_k$ be all the outgoing edges of~$u$ in~$G$. If $k=1$, then $u\TO v_1$ must of course be used in all optimal strategies for \MAX\ in~$G$. Assume, therefore, that $k>1$. For every $i\in [k]$, let~$G_i$ be the game obtained from~$G$ by leaving $u\TO v_i$ as the only outgoing edge of~$u$. If there is an optimal positional strategy for \MAX\ that uses the edge $u\TO v_i$, then the values of all vertices in~$G_i$ are~$\mu$, and vice versa. At least one index~$i\in[k]$ must satisfy this condition.

To check whether all values in~$G_i$ are~$\mu$, we form the product $G_i\times H$ and call $\ZeroRecover(G_i\times H)$. If all values in~$G_i$ are indeed~$\mu$, then all values in $G_i\times H$ are~$0$ and $\ZeroRecover(G_i\times H)$ returns a pair of optimal strategies of the two players in $G_i\times H$. By Corollary~\ref{C-poly}, we can verify in polynomial time that the two strategies returned are indeed optimal strategies and that the values of all vertices are~$0$. If this happens, we know that $u\TO v_i$ can be used by an optimal strategy of \MAX\ in~$G$.

What happens, however, if the values of all vertices in~$G_i$ are not all~$\mu$? In this case, not all the values in $G_i\times H$ are~$0$ and we have absolutely no guarantees on the behavior of $\ZeroRecover(G_i\times H)$ as the promise that all values in $G_i\times H$ are~$0$ does not hold. The call may return a pair of strategies in $G_i\times H$ that are not necessarily optimal, the output may not consist of valid strategies for the two players, the call may abort due to some internal error, or it may actually run forever.

To overcome this difficulty, we run all the calls $\ZeroRecover(G_1\times H) ,\ZeroRecover(G_2\times H),\ldots, \allowbreak\ZeroRecover(G_k\times H)$ in \emph{parallel}, or more precisely, we simulate such a parallel run on our sequential machine, alternately performing operations in each one of these calls. Whenever one of these runs, say, $\ZeroRecover(G_i\times H)$, terminates, we check whether its output consists of a valid pair of strategies, and if so, we check, in polynomial time, whether they are optimal strategies for the two players in $G_i\times H$ and whether all values in the game are~$0$. If so, we choose $u\TO v_i$ as the outgoing edge of~$u$, and abort all the currently running processes. If not, we discard this output and keep running the surviving processes. As mentioned above, at least one of the~$k$ calls must eventually terminate and pass the test. 

When an appropriate outgoing edge $u\TO v_i$ is found, we let $G\gets G_i$ and repeat the process for the other vertices of~$V_1$ until we are left with an optimal positional strategy for \MAX\ in~$G$.

An optimal strategy for \MIN\ in the original game~$G$ (with which we began the subsection) can be found in a similar manner. We can also repeat the process and find optimal strategies for the two players in~$H$, but this is not necessary in our case as we originally have $H=-G$.

By Corollary~\ref{C-values}, from either recovered optimal strategy we can compute the common value $\mu_{\max}$ in polynomial time.

\subsection{The peeling technique}\label{sub-peeling}

Using the self-reducibility technique, we obtained the value~$\mu_{\max}$ of all the vertices in~$\bG$ and optimal strategies $\bs_0$ and $\bs_1$ for the two players in~$\bG$, where $\bG$ is the game obtained from the original game~$G$ by applying the equalizing transformation of Section~\ref{sub-All-Equal}. We still need to compute the values $val_G(u)$ of all vertices in the original game~$G$ and convert $\bs_0$ and $\bs_1$ into optimal strategies $\sigma_0$ and $\sigma_1$ for the two players in~$G$. We do this using a \emph{peeling} technique.

Let $\bs_1$ be an optimal strategy of \MAX\ in $\bG=(\bV,V_0,\bV_1,\bE,\bc)$. Let 
\[J \EQ \{uv \in \bar{V}_1 \mid \bs_1(uv)\ne (uv\TO v)\} \;,\]
be the set of vertices of \MAX\ in $\bG$ from which $\bs_1$ uses jump edges. Let
\[U \EQ Reach_{\bG,0}^{\bs_1}(J) \] be the set of vertices from which \MIN\ can force reaching some vertex of~$J$ when \MAX\ plays~$\bs_1$. The set~$U$ is the winning set of a reachability game that can be easily solved in polynomial time as follows. Initialize $U\gets J$. As long as there is a vertex $u\in V_0$ and an edge $(u\TO v)\in \bE$ such that $v\in U$, or a vertex $u\in \bV_1$ such that $\bs_1(u)=(u\TO v)$ and $v\in U$, add~$u$ to~$U$. With some care, $U$ can actually be found in linear time. 

Let $\bV_{\max}=\bV\setminus U$. We first claim that~$U$ cannot contain all original vertices of~$\bG$, and thus $\bV_{\max}$ must contain at least one original vertex of~$G$. We then claim that $\bs_1$ ensures a value of $\mu_{\max}$ from every vertex of $\bV_{\max}$ without leaving $\bV_{\max}$ and without using any jump edge.

\begin{lemma}
    Let $\bV_{\max}=\bV\setminus U$. Then, $\bV_{\max}\cap V\ne \emptyset$. In other words, $\bV_{\max}$ contains at least one original vertex of~$G$.
\end{lemma}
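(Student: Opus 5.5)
We argue by contradiction: assume every original vertex of~$G$ lies in~$U$. We then build a strategy~$\bs_0$ of \MIN\ in~$\bG$ which, against the optimal strategy~$\bs_1$, produces a play whose cycle contains a jump edge. By the choice of~$M$, such a cycle has average weight below every original edge cost. Since $\mu_{\max}$ is a cycle average of original edge costs, it is at least the minimum edge cost, so this cycle average is strictly less than~$\mu_{\max}$. That contradicts the optimality of~$\bs_1$, which by Lemma~\ref{L-equate} must guarantee $\mu_{\max}$ from every vertex.

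The strategy~$\bs_0$ comes from the attractor construction of~$U$. Whenever a vertex $u\in V_0$ is added to~$U$ because of an edge $(u\TO v)\in\bE$ with~$v$ already in~$U$, we set $\bs_0(u)=(u\TO v)$. On the remaining vertices of~$V_0$ we choose $\bs_0$ arbitrarily, but none remain, because by assumption $V\subseteq U$. By the standard attractor argument, under $(\bs_0,\bs_1)$ every play starting in~$U$ reaches~$J$ in finitely many steps: each step strictly decreases the rank of the vertex, meaning the round in which it was added to~$U$, until a vertex of~$J$ is reached.

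Next we show that the play visits~$J$ infinitely often. Each time the play is at a vertex $uv\in J$, $\bs_1$ takes a jump edge to some original vertex $w\in V$. By assumption $w\in U$, so the play reaches~$J$ again. Both strategies are positional, so the play is eventually periodic (Lemma~\ref{L-out}), and its cycle must contain a vertex of~$J$. Hence the cycle contains a jump edge of cost~$-M$.

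It remains to bound the cycle average, following the estimate in the proof of Lemma~\ref{L-equate}. The cycle has at most $|\bV|$ edges, and every non-jump edge has cost at most $C=\max|c|$. The one point to verify is that $M>4nC$ suffices for cycles in~$\bG$, which can be longer than~$n$. Any two jumps on the cycle are separated by a walk along split original edges. If that walk has length at least~$2n$ (in~$\bG$ edges), it passes through at least $n+1$ original vertices, so it repeats one. Such a repeat would be a sub-cycle avoiding~$J$ that lies inside the periodic part, which is impossible since the cycle is simple. So between consecutive jumps there are fewer than~$2n$ edges, each of cost at most~$C$. Each jump therefore contributes at most $2nC-M<0$, and comparing against the minimum edge cost needs $M>4nC$. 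This length accounting is the only step requiring care; the rest is routine.
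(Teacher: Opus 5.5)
Your proof is correct and follows the paper's argument. Assuming $V\subseteq U$, the attractor strategy $\bs_0$ forces the play against $\bs_1$ through $J$ infinitely often, so the cycle contains a jump edge and its average falls below $-C\le\mu_{\max}$, contradicting the optimality of $\bs_1$. Your segment-by-segment length accounting is valid but unnecessary: the cycle is simple and alternates between original and split vertices, so it has length $2k$ with $k\le n$, which is exactly the bound the paper uses directly.
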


\begin{proof}
    Assume, for the sake of contradiction, that there is a strategy $\bs_0$ of \MIN\ that guarantees reaching a vertex of~$J$ from every vertex of~$V$ when \MAX\ plays $\bs_1$. For every $u\in V$, the cycle in $play_{\bG}^{\bs_0,\bs_1}(u)$ must contain a jump edge of weight $-M$. Recall that $M>4n C$, where $C=\max_{e\in E} |c(e)|$. If the length of the cycle is~$2k$, where $1\le k\le n$, then 
    \[ out_{\bG}^{\bs_0,\bs_1}(u) \LE \frac{(2k-1)C-M}{2k} \LT \frac{(2k-1)C-4nC}{2k} \LE -\frac{(2n+1)C}{2n} \LT -C \LE \mu_{\max} \;, \]
    contradicting the assumption that $\bs_1$ is an optimal strategy of player~$1$ in~$\bG$ that ensures a value of at least~$\mu_{\max}$ from each vertex. Thus $U\not\supseteq V$ and $\bV_{\max}\cap V\ne \emptyset$.
\end{proof}

\begin{lemma}\label{L-2}
    $\bs_1$ ensures a value of at least $\mu_{\max}$ from every vertex of $\bV_{\max}$ without leaving $\bV_{\max}$ and without using any jump edge.
\end{lemma}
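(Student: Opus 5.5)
The lemma has three parts: (a) $\bV_{\max}$ is closed under the play when \MAX\ uses $\bs_1$; (b) no jump edge is used inside $\bV_{\max}$; (c) $\bs_1$ still guarantees at least $\mu_{\max}$ from every vertex of $\bV_{\max}$. Parts (a) and (b) follow from the closure properties of the attractor $U=Reach_{\bG,0}^{\bs_1}(J)$. Part (c) then follows from the optimality of $\bs_1$ in $\bG$ (Lemma~\ref{L-equate}).

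For closure, take $w\in\bV_{\max}$.
\begin{itemize}
\item If $w\in V_0$, then every outgoing edge $w\TO wv$ of $w$ in $\bG$ must lead into $\bV_{\max}$. Otherwise $wv\in U$, and the construction of $U$ would have added $w$ to $U$.
\item If $w\in\bV_1$ and $\bs_1(w)=(w\TO x)$, then $x\in\bV_{\max}$. Otherwise the rule ``$\bs_1(w)$ leads into $U$'' would have put $w$ into $U$.
\end{itemize}
So every play from $\bV_{\max}$ consistent with $\bs_1$ stays in $\bV_{\max}$, whatever \MIN\ does. This holds for arbitrary, not only positional, strategies of \MIN, since the argument is stepwise.

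For the absence of jumps, note that $J\subseteq U$. Hence every intermediate vertex $uv\in\bV_{\max}\setminus V$ satisfies $uv\notin J$, i.e., $\bs_1(uv)=(uv\TO v)$. The original vertices in $V_1$ have no jump edges at all, because $\bs_1$ chooses some $u\TO uv$ there. Thus no jump edge is ever taken along plays from $\bV_{\max}$ under $\bs_1$.

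For the value guarantee, let $\bs_0$ be any strategy of \MIN\ and let $w\in\bV_{\max}$. Since $\bs_1$ is optimal in $\bG$ and $val_{\bG}(w)=\mu_{\max}$ by Lemma~\ref{L-equate}, we have $out^{\bs_0,\bs_1}_{\bG}(w)\ge\mu_{\max}$. By closure, this play lies entirely in $\bV_{\max}$. So the guarantee holds within the subgame on $\bV_{\max}$ as well.

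The only care needed is to phrase this so it can be used later. Here $\bG[\bV_{\max}]$ is a well-defined restricted subgame in the sense of Definition~\ref{D-sub}, after removing from it the \MIN\ edges leaving $\bV_{\max}$; by closure there are none, and every \MAX\ vertex keeps its $\bs_1$-edge. Also, $\bs_1$ restricted to this subgame is a strategy guaranteeing $\mu_{\max}$ there.

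I expect no real obstacle. The one delicate point is making sure that the closure rule for $U$ was stated for \emph{all} \MIN\ edges and for the \emph{chosen} \MAX\ edge, which is exactly how $U$ was defined.
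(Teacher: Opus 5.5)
Your proof is correct and follows the paper's route. The three ingredients match: closure of $\bV_{\max}$ under plays consistent with $\bs_1$ (from the definition of the attractor $U$), no jump edges (from $J\subseteq U$), and the bound $\mu_{\max}$ inherited from the optimality of $\bs_1$ in $\bG$; you simply spell out the closure step that the paper states in one line.
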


\begin{proof}
    Any play which starts in $\bV_{\max}$ in which \MAX\ plays $\bs_1$, and \MIN\ plays arbitrarily, stays in~$\bV_{\max}$ and does not use any jump edge. Since $\bs_1$ is a strategy that guarantees a value of at least~$\mu_{\max}$ from every vertex of~$\bG$, the value of each such play is also at least $\mu_{\max}$.
\end{proof}

Let $\sigma_1$ be a (partial) strategy for \MAX\ in~$G$ defined as follows. For every $u\in V_1\cap \bV_{\max}$, if $\bs_1(u)=(u\TO uv)$, let $\sigma_1(u)=(u\TO v)$. The following corollary follows immediately from Lemma~\ref{L-2}.

\begin{corollary}
    $\sigma_1$ ensures a value of at least $\mu_{\max}$ from every vertex of $V_{\max}=V\cap \bV_{\max}$ of~$G$ without leaving $V_{\max}$.
\end{corollary}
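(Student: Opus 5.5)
The plan is to transfer Lemma~\ref{L-2} from $\bG$ back to $G$ through the natural correspondence between plays in $\bG$ that use no jump edges and plays in $G$.

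The first step is to check that $\sigma_1$ is well defined on $V_1\cap\bV_{\max}$ and keeps the play inside $V_{\max}$. For $u\in V_1\cap\bV_{\max}$ we have $\bs_1(u)=(u\TO uv)$, and $uv\notin U$, since otherwise $u$ would have been added to $U$ by the reachability closure. Hence $uv\notin J$, so $\bs_1(uv)=(uv\TO v)$, and $v\notin U$, since otherwise $uv\in U$. Thus $v\in V_{\max}$.

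The same closure argument handles player~$0$. If $u\in V_0\cap\bV_{\max}$ and $(u\TO v)\in E$, then the edge $u\TO uv$ exists in $\bG$, so $uv\notin U$ (otherwise $u\in U$). As before, $uv\notin J$ forces $\bs_1(uv)=(uv\TO v)$, and then $v\notin U$. So every move of player~$0$ from $V_{\max}$ in $G$ stays in $V_{\max}$, and each $u\in V_{\max}$ has at least one outgoing edge inside $V_{\max}$. This is exactly what ``without leaving $V_{\max}$'' requires.

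The next step is the value bound. Take any strategy $\sigma_0$ of player~$0$ in $G$; it may be history-dependent, but positional suffices by Theorem~\ref{T-MPG} applied to the subgame $G[V_{\max}]$. Lift it to $\bG$ by setting $\bs_0(u)=(u\TO uv)$ whenever $\sigma_0(u)=(u\TO v)$. Starting from $u\in V_{\max}$, the play of $(\bs_0,\bs_1)$ in $\bG$ is the play of $(\sigma_0,\sigma_1)$ in $G$ with each edge $u\TO v$ split into $u\TO uv\TO v$. No jump edges occur, by the first two steps. Both halves of a split edge carry the weight $c(u\TO v)$, and the play has twice as many steps, so the average weight of the cycle is unchanged. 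Lemma~\ref{L-2} gives that this outcome is at least $\mu_{\max}$, hence $out^{\sigma_0,\sigma_1}_G(u)\ge\mu_{\max}$.

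I expect the main care point to be the closure argument showing that no intermediate vertex $uv$ reached from $V_{\max}$ lies in $J$ or $U$. It is routine given the definition of $U$ as the least set closed under the reachability rules. The rest is bookkeeping between the two games.
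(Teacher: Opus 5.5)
Your proposal is correct and is the same argument the paper compresses into ``follows immediately from Lemma~\ref{L-2}''. The closure of $U$ keeps plays from $V_{\max}$ under $\bs_1$ out of $J$ and $U$, and splitting edges doubles both cycle length and cycle weight, so cycle averages transfer unchanged. The only quibble is your aside on history-dependent strategies, which the paper avoids by working with positional strategies throughout; if you keep it, Theorem~\ref{T-MPG} should be applied to $G[V_{\max}]$ with player~$1$'s moves restricted to $\sigma_1$, not to $G[V_{\max}]$ itself.
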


We now know that the values of all vertices of $V_{\max}=V\cap \bV_{\max}\ne \emptyset$ is $\mu_{\max}$ and that $\sigma_1$ ensures at least this value from each vertex of~$V_{\max}$. (The values of the vertices in $V_{\max}$ cannot be larger than~$\mu_{\max}$, as $\mu_{\max}$ is assumed to be the largest value of a vertex in~$G$.) The set $V_{\max}$ does not necessarily contain all vertices of value~$\mu_{\max}$ in~$G$. (The optimal strategy $\bs_1$ may use jump edges from some vertices even if these jump edges can be avoided.) We can possibly enlarge the set of vertices of~$G$ known to have value~$\mu_{\max}$ by computing the closure, or the \emph{max-attractor}
\[ A \EQ Reach_{G,1}(V_{\max}) \;, \] 
the set of all vertices in~$G$ from which \MAX\ can force reaching a vertex of~$V_{\max}$, no matter which strategy is used by player~$0$. The set $A$ can again be found in polynomial time, and even linear time. 

We can easily extend $\sigma_1$ into a strategy $\alpha_1$ of \MAX\ that ensures a value of at least~$\mu_{\max}$ from all the vertices of~$A$. From a vertex $u\in A\setminus V_{\max}$, follow a strategy that ensures getting eventually to some vertex of~$V_{\max}$. From a vertex $u\in V_{\max}$, follow $\sigma_1$. (The set~$A$ is still not guaranteed to contain all vertices of~$G$ of value~$\mu_{\max}$.) We thus have:

\begin{lemma}\label{L-alpha-1}
    The strategy $\alpha_1$ of player~$1$ in~$G$ ensures a value of at least $\mu_{\max}$ from every vertex~of~$A$ while remaining in~$A$.
\end{lemma}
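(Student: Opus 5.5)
We argue directly from the definition of $\alpha_1$ and the corollary preceding it. Fix an arbitrary (possibly non-positional) strategy of player~$0$ in~$G$ and a starting vertex $u\in A$.

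First we show the play never leaves~$A$. The max-attractor $A=Reach_{G,1}(V_{\max})$ is computed by the usual backward closure, and this gives each $u\in A\setminus V_{\max}$ a rank, namely the iteration at which it was added. If $u\in (A\setminus V_{\max})\cap V_1$, then $\alpha_1(u)$ is an edge to a vertex of strictly smaller rank, or to $V_{\max}$. If $u\in (A\setminus V_{\max})\cap V_0$, then $u$ was added only because \emph{all} its outgoing edges lead to vertices of smaller rank (in $A$) or into $V_{\max}$. In both cases every move from $A\setminus V_{\max}$ stays inside~$A$ and strictly decreases the rank, so within at most $|A|$ steps the play enters~$V_{\max}$. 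Once in $V_{\max}$, player~$1$ follows $\sigma_1$, and by the corollary following Lemma~\ref{L-2} the play never leaves $V_{\max}\subseteq A$. Hence the whole play remains in~$A$.

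For the value guarantee, the play consists of a finite prefix inside $A\setminus V_{\max}$, followed by a suffix starting at some $w\in V_{\max}$ during which player~$1$ uses~$\sigma_1$. The prefix does not affect the limiting average outcome. The suffix is a play from $w$ in which player~$1$ uses $\sigma_1$ against some (history-dependent) strategy of player~$0$. By the corollary, $\sigma_1$ guarantees at least $\mu_{\max}$ from $w$ against every strategy of player~$0$, not only positional ones. This is justified because Lemma~\ref{L-2} is proved for an arbitrary play of \MIN\ against $\bs_1$, together with Theorem~\ref{T-MPG}. Therefore the outcome is at least~$\mu_{\max}$.

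The only delicate point is to ensure the corollary is applied against arbitrary, history-dependent opponents, since the suffix strategy of player~$0$ depends on the prefix. If one prefers positional reasoning, one can restrict to positional player~$0$ strategies, as Theorem~\ref{T-MPG} permits for a fixed positional $\alpha_1$. The play is then eventually periodic, its cycle lies entirely in $V_{\max}$, and it is a cycle consistent with $\sigma_1$ reachable by some \MIN\ play from a vertex of $V_{\max}$. Its mean is therefore at least $\mu_{\max}$ by the corollary, which completes the argument.
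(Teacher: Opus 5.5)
Your proof is correct and takes the paper's approach: the attractor part of $\alpha_1$ forces the play into $V_{\max}$ within finitely many steps without leaving $A$, and from then on the corollary to Lemma~\ref{L-2} keeps the play in $V_{\max}$ with outcome at least $\mu_{\max}$, while the finite prefix does not affect the limiting average. Of your two justifications, the positional fallback is the cleaner one, since it matches the paper's convention that all strategies are positional and so avoids arguing that $\bs_1$ is optimal against history-dependent opponents.
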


We can convert the optimal strategy $\bs_0$ of player~$0$ in~$\bG$ into a strategy $\alpha_0$ of player~$0$ in~$G$ as follows: If $\bs_0(u)=(u\TO uv)$, let $\alpha_0(u)=(u\TO v)$, for every $u\in V_0$. The following claim is also obvious:

\begin{lemma}\label{L-alpha-0}
    The strategy $\alpha_0$ of player~$0$ in~$G$ ensures a value of at most $\mu_{\max}$ from every vertex~of~$V$.
\end{lemma}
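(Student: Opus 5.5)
The plan is to compare plays in~$G$ under $\alpha_0$ with plays in~$\bG$ under $\bs_0$, and then to use the optimality of $\bs_0$ in~$\bG$ (Lemma~\ref{L-equate} gives all values of $\bG$ equal to $\mu_{\max}$). The key observation is that player~$0$'s moves in $\bG$ and in $G$ coincide on original vertices: $\bs_0(u)=(u\TO uv)$ corresponds exactly to $\alpha_0(u)=(u\TO v)$. Player~$1$ in~$\bG$ has at least as many options as in~$G$, since he may always take the non-jump edge $uv\TO v$.

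First, fix an arbitrary positional strategy $\sigma_1$ of player~$1$ in~$G$. From it I will define a strategy $\bs_1$ of player~$1$ in~$\bG$ as follows. For $u\in V_1$ with $\sigma_1(u)=(u\TO v)$, set $\bs_1(u)=(u\TO uv)$. For every intermediate vertex $uv$, set $\bs_1(uv)=(uv\TO v)$, so no jump edges are ever used.

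Next, for any $u\in V$, compare $play^{\bs_0,\bs_1}_{\bG}(u)$ with $play^{\alpha_0,\sigma_1}_{G}(u)$. The former is the latter with each edge $u\TO v$ subdivided into $u\TO uv\TO v$, and both halves carry cost $c(u\TO v)$. By Lemma~\ref{L-out}, both plays end in a cycle. The cycle in~$\bG$ has twice the length of the cycle in~$G$ and twice the total cost, so the average edge weights agree. Hence $out^{\alpha_0,\sigma_1}_G(u)=out^{\bs_0,\bs_1}_{\bG}(u)\le val_{\bG}(u)=\mu_{\max}$, where the inequality uses that $\bs_0$ is optimal for player~$0$ in~$\bG$.

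Finally, since $\sigma_1$ was arbitrary positional, $\max_{\sigma_1\in\Sigma_1^P} out^{\alpha_0,\sigma_1}_G(u)\le\mu_{\max}$. Fixing $\alpha_0$ leaves a one-player game, in which positional best responses suffice (Lemma~\ref{L-counter}, or Theorem~\ref{T-MPG} applied to the game with player~$0$'s moves fixed). So the bound also holds against arbitrary strategies. I expect no real obstacle here; the only point needing care is this last reduction from arbitrary to positional counter-strategies, which is standard.
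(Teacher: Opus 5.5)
Your argument is correct and is exactly the reasoning the paper leaves implicit when it calls this claim obvious. You simulate any player-$1$ strategy of $G$ by a non-jumping strategy in $\bG$, note that the two plays coincide up to edge subdivision (so the cycle means agree), and apply the optimality of $\bs_0$ together with $val_{\bG}\equiv\mu_{\max}$ from Lemma~\ref{L-equate}, as in the first half of that lemma's proof. Your final step, passing to non-positional counter-strategies, is sound, though under the paper's convention that all strategies are positional it is not needed.
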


We made some progress, as we have discovered the maximum value~$\mu_{\max}$ and identified a nonempty set $A$ of vertices that all have the value $\mu_{\max}$. If $A\ne V$ we need to find the values of all vertices in $B=V\setminus A$. (Note that $A\cup B = V$ and $A\cap B=\emptyset$.) We also need to find optimal strategies for the two players that ensure all these values.

If $B\ne\emptyset$, we consider the game $G[B]$, i.e., the game~$G$ restricted to~$B$. (See Definition~\ref{D-sub}.) The game $G[B]$ is obtained by removing all vertices of~$A$ and their edges from~$G$. By the closure of~$A$, each vertex $u\in B$ has at least one outgoing edge in~$G[B]$, so~$G[B]$ is a valid game. We now claim:

\begin{lemma}\label{L-peel}
    For every $u\in B$, $val_G(u)=\min\{ \mu_{\max},val_{G[B]}(u)\}$.
\end{lemma}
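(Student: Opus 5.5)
We prove the two inequalities $val_G(u)\le\min\{\mu_{\max},val_{G[B]}(u)\}$ and $val_G(u)\ge\min\{\mu_{\max},val_{G[B]}(u)\}$ separately, for every $u\in B$. Each is shown by exhibiting a positional strategy for the appropriate player in~$G$ that guarantees the bound from every vertex of~$B$. The key structural fact is that $A=Reach_{G,1}(V_{\max})$ is a max-attractor. Hence every vertex $u\in B\cap V_1$ has \emph{all} its outgoing edges leading into~$B$, since otherwise $u$ would belong to~$A$. Every vertex $u\in B\cap V_0$ has at least one edge into~$B$, though it may also have edges into~$A$. So \MAX\ can never leave~$B$ on his own; only \MIN\ can move the play from~$B$ into~$A$.

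For the upper bound, let $\tau_0$ be an optimal strategy of \MIN\ in $G[B]$. Define $\sigma_0$ in~$G$ by $\sigma_0(u)=\tau_0(u)$ for $u\in B\cap V_0$, and $\sigma_0(u)=\alpha_0(u)$ for $u\in A\cap V_0$. Fix $u\in B$ and any positional strategy $\sigma_1$ of \MAX.
\begin{itemize}
\item If the play from~$u$ stays in~$B$ forever, then it is a play of $G[B]$ in which \MIN\ uses $\tau_0$, since \MAX's edges from~$B$ stay in~$B$. Its outcome is therefore at most $val_{G[B]}(u)$.
\item If the play enters~$A$, it never returns to~$B$ on the cycle. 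This is because \MIN\ plays $\alpha_0$ and its eventual cycle lies in the part of the play after the last entry. More simply, the outcome is the average of the terminal cycle, and that cycle is a cycle of a play consistent with $\alpha_0$ from some vertex. By Lemma~\ref{L-alpha-0}, its outcome is at most $\mu_{\max}$.
\end{itemize}
The second case needs a little care. If the play alternates between $A$ and~$B$, the cycle uses $\tau_0$ edges on~$B$ and not $\alpha_0$ edges. I would handle this by instead letting $\sigma_0$ follow $\alpha_0$ everywhere except on~$B$, and arguing directly with cycle averages. Alternatively, and more cleanly, I would use the trivial bound $val_G(u)\le\mu_{\max}$ (since $\mu_{\max}$ is the maximum value) together with a separate strategy for the bound $\le val_{G[B]}(u)$. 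Any cycle of the play under $\tau_0$ that stays in~$B$ has average at most $val_{G[B]}$. If \MIN\ never chooses an edge into~$A$, which is the case under $\tau_0$ since $\tau_0$ uses only $G[B]$ edges, then the play never leaves~$B$. So playing $\tau_0$ on~$B$ alone already gives $val_G(u)\le val_{G[B]}(u)$. Combined with $val_G(u)\le\mu_{\max}$, this yields the upper bound, and it avoids the mixing issue entirely.

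For the lower bound, let $\tau_1$ be optimal for \MAX\ in $G[B]$. Define $\sigma_1=\tau_1$ on $B\cap V_1$ and $\sigma_1=\alpha_1$ on $A\cap V_1$. By Lemma~\ref{L-alpha-1}, once the play enters~$A$ it remains in~$A$ forever, and the outcome is at least $\mu_{\max}$. The main subtle step is to check that the outcome against an arbitrary \MIN\ strategy is well defined through this switch: the tail of the play starting at the entry vertex is a play from a vertex of~$A$ under $\alpha_1$. If instead the play stays in~$B$ forever, it is a play of $G[B]$ against $\tau_1$, so its outcome is at least $val_{G[B]}(u)$. In either case the outcome is at least $\min\{\mu_{\max},val_{G[B]}(u)\}$, which completes the proof.
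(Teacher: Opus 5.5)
Your final argument is correct and is essentially the paper's proof. For the upper bound you combine the trivial bound $val_G(u)\le\mu_{\max}$ with an optimal strategy of player~$0$ in $G[B]$; because player~$1$ cannot leave $B$, that strategy keeps every play from $B$ inside $B$. For the lower bound you glue an optimal strategy of player~$1$ in $G[B]$ with $\alpha_1$ on $A$, and split on whether the play ever enters $A$. The ``mixing issue'' in your first attempt at the upper bound never arises: under $\tau_0$ on $B$, no play starting in $B$ can reach $A$, so that detour can simply be dropped.
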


\begin{proof}
    For every $u\in V$ we have $val_G(u)\le \mu_{\max}$. We next show that for every $u\in B$ we have $val_G(u)\le val_{G[B]}(u)$. Indeed, let $\sigma_0$ be an optimal strategy of player~$0$ in~$G[B]$. Extend it in an arbitrary way into a strategy of player~$0$ in~$G$. Let $\sigma_1$ be an arbitrary strategy of player~$1$ in~$G$. For every $u\in B$, $play_G^{\sigma_0,\sigma_1}(u)$ stays in~$B$. Hence $out_G^{\sigma_0,\sigma_1}(u)\le val_{G[B]}(u)$. As this holds for every $\sigma_1$, we have $val_G(u)\le val_{G[B]}(u)$ and thus $val_G(u)\le \min\{ \mu_{\max},val_{G[B]}(u)\}$.

    We next show the opposite inequality. Let $\sigma_1^B$ be an optimal strategy of player~$1$ in $G[B]$. Define the following strategy for player~$1$ in $G$:
    \[ \sigma_1(u) \EQ
    \begin{cases}
    \alpha_1(u) & \text{if } u\in V_1\cap A\\
    \sigma_1^B(u)  & \text{if } u\in V_1\cap B
    \end{cases} \]
    Let $\sigma_0$ be an arbitrary strategy of player~$0$. For every $u\in B$, $play_G^{\sigma_0,\sigma_1}(u)$ either remains forever in~$B$, or eventually enters~$A$ and remains there forever. In the first case we have $out_G^{\sigma_0,\sigma_1}(u)\ge val_{G[B]}(u)$, by the optimality of $\sigma_1^B$ in $G[B]$. In the second case, we have $out_G^{\sigma_0,\sigma_1}(u)\ge \mu_{\max}$, by Lemma~\ref{L-alpha-1}. We thus get $val_G(u)\ge \min\{ \mu_{\max},val_{G[B]}(u)\}$, for every $u\in B$.
\end{proof}

If $B\ne\emptyset$, we recursively find all the values $val_{G[B]}(u)$, for every $u\in B$, and optimal strategies $\sigma^B_0$ and $\sigma^B_1$ of the two players in~$G[B]$. (Note that this involves repeating the whole process from scratch on~$G[B]$.) The depth of the recursion is at most~$n=|V|$ as $|B|<|V|$.

Using Lemma~\ref{L-peel} we can now recover the values $val_G(u)$, for every $u\in V$. We next show how to use~$\sigma^B_0$ and~$\sigma^B_1$ to obtain optimal strategies $\sigma_0$ and $\sigma_1$ of the two players in~$G$. Let 
\begin{align*}
    L &\EQ \{ u\in B \mid val_{G[B]}(u) < \mu_{\max} \} \;, \\
    M &\EQ V\setminus L \EQ A \cup \{ u\in B \mid val_{G[B]}(u) \ge \mu_{\max} \} \;. 
\end{align*}
We let $\sigma_1$ be the strategy of player~$1$ defined in the proof of Lemma~\ref{L-peel}, and define
a strategy $\sigma_0$ for player~$0$ in~$G$ as follows:
\[
\sigma_0(u) \EQ
\begin{cases}
\alpha_0(u) & \text{if } u\in V_0\cap M\\
\sigma_0^B(u)  & \text{if } u\in V_0\cap L
\end{cases}
\]

\begin{lemma}
    The strategies $\sigma_0$ and $\sigma_1$ are optimal strategies for the two players in~$G$.
\end{lemma}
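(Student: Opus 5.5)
We show that each strategy guarantees $val_G(u)$ from every $u\in V$, with the values given by Lemma~\ref{L-peel}. On $A$ the value is $\mu_{\max}$, and on $B$ it is $\min\{\mu_{\max},val_{G[B]}(u)\}$. So $val_G(u)=\mu_{\max}$ for $u\in M$ and $val_G(u)=val_{G[B]}(u)<\mu_{\max}$ for $u\in L$.

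\emph{Optimality of $\sigma_1$.} This is essentially the second half of the proof of Lemma~\ref{L-peel}. Take any $\sigma_0'$ and any $u\in V$.
\begin{itemize}
\item If $u\in A$, the play stays in $A$ by Lemma~\ref{L-alpha-1}, so the outcome is at least $\mu_{\max}=val_G(u)$.
\item If $u\in B$, either the play stays in $B$, where $\sigma^B_1$ gives outcome at least $val_{G[B]}(u)$, or it enters $A$, giving at least $\mu_{\max}$.
\end{itemize}
In both cases the outcome is at least $\min\{\mu_{\max},val_{G[B]}(u)\}=val_G(u)$.

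\emph{Optimality of $\sigma_0$.} Fix any positional $\sigma_1'$ and consider the play from $u$, which is eventually periodic by Lemma~\ref{L-out}. The key observation is that $L$ is closed under moves of player~$1$ inside $G$ and under $\sigma_0$.
\begin{itemize}
\item Let $w\in L\cap V_1$. Every edge $w\TO x$ has $x\notin A$, since otherwise $w$ would lie in the max-attractor $A$. So $x\in B$, and by Lemma~\ref{L-equal} in $G[B]$ we have $val_{G[B]}(x)\le val_{G[B]}(w)<\mu_{\max}$, hence $x\in L$.
\item Let $w\in L\cap V_0$. Then $\sigma_0(w)=\sigma^B_0(w)$ is an optimal edge in $G[B]$, so by Lemma~\ref{L-equal} its head $x$ satisfies $val_{G[B]}(x)=val_{G[B]}(w)<\mu_{\max}$. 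Hence $x\in L$.
\end{itemize}
Consequently a play starting in $L$ stays in $L\subseteq B$ and is consistent with $\sigma^B_0$ in $G[B]$, so its outcome is at most $val_{G[B]}(u)=val_G(u)$.

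A play starting in $M$ either eventually enters $L$, in which case its cycle lies in $L$ and the outcome is below $\mu_{\max}$, or it stays in $M$ forever. In the latter case every player~$0$ move on it is made by $\alpha_0$. We would like to invoke Lemma~\ref{L-alpha-0} to bound the outcome by $\mu_{\max}$.

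The main obstacle is that this play is not literally a play against $\alpha_0$ everywhere, so Lemma~\ref{L-alpha-0} does not apply directly. We handle it by looking only at the terminal cycle $C$ of the play, which lies in $M$ and on which all player~$0$ choices agree with $\alpha_0$. Starting at a vertex of $C$, the play in which player~$0$ uses $\alpha_0$ everywhere and player~$1$ uses $\sigma_1'$ follows exactly $C$. Its outcome is the mean weight of $C$, which Lemma~\ref{L-alpha-0} bounds by $\mu_{\max}$.

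For the play that enters $L$, the outcome equals $val_{G[B]}$ of the entry vertex by the previous case, which is less than $\mu_{\max}$. Hence every play starting in $M$ has outcome at most $\mu_{\max}=val_G(u)$. Since $\sigma_0$ guarantees at most $val_G(u)$ and $\sigma_1$ guarantees at least $val_G(u)$ from every vertex, both strategies are optimal.
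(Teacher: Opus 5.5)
Your proof is correct and follows the paper's argument: an $L$/$M$ case split for $\sigma_0$ and an $A$/$B$ case split for $\sigma_1$. It also usefully justifies the closure of $L$ (via the attractor property of $A$ and Lemma~\ref{L-equal}), which the paper only asserts. Two minor remarks: the ``obstacle'' in the $M$ case is not real, since a play that never leaves $M$ only visits vertices where $\sigma_0=\alpha_0$ and so literally coincides with $play_G^{\alpha_0,\sigma'_1}(u)$, as the paper uses directly; and for a play entering $L$ the outcome is at most (not equal to) $val_{G[B]}$ of the entry vertex, which is all you need.
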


\begin{proof}
    We begin by showing that $\sigma_0$ is an optimal strategy for player~$0$ in~$G$. Let $\sigma'_1$ be an arbitrary strategy of player~$1$. Let $u\in V$ and consider $play_G^{\sigma_0,\sigma'_1}(u)$. If $u\in L$, then $\sigma_0$ ensures that the play remains in~$L$. The play is then of the form $play_{G[B]}^{\sigma^B_0,\sigma''_1}(u)$, where $\sigma''_1$ is $\sigma'_1$ restricted to~$G[B]$, and the optimality of~$\sigma^B_0$ guarantees that $out_G^{\sigma_0,\sigma'_1}(u) = out_{G[B]}^{\sigma^B_0,\sigma''_1}(u) \le val_{G[B]}(u) = val_G(u)$. If $u\in M$, there are two cases. If the play remains in~$M$, then by Lemma~\ref{L-alpha-0}, we have 
    $out_G^{\sigma_0,\sigma'_1}(u)=out_G^{\alpha_0,\sigma'_1}(u)\le \mu_{\max}=val_G(u)$. If the play ever enters~$L$, then the tail of the play remains in~$L$ and the outcome is smaller than $\mu_{\max}$.

    We next show that $\sigma_1$ is an optimal strategy for player~$1$ in~$G$. Let $\sigma'_0$ be an arbitrary strategy of player~$0$. Let $u\in V$ and consider $play_G^{\sigma'_0,\sigma_1}(u)$. We again have two cases. If $u\in A$, then by Lemma~\ref{L-alpha-1}, $out_G^{\sigma'_0,\sigma_1}(u) = out_G^{\sigma'_0,\alpha_1}(u) \ge \mu_{\max} = val_G(u)$. If $u\in B$, we have two subcases. If the play remains in~$B$, then by the optimality of~$\sigma^B_1$ we have $out_G^{\sigma'_0,\sigma_1}(u) = out_{G[B]}^{\sigma''_0,\sigma^B_1}(u) \ge val_{G[B]}(u) \ge val_G(u)$. If the play reaches~$A$ it remains in~$A$ and again by Lemma~\ref{L-alpha-1} we have $out_G^{\sigma'_0,\sigma_1}(u)\ge \mu_{\max} \ge val_G(u)$.
\end{proof}

The peeling process can also be described as an iterative process as follows. Let $G_1=G$ be the original graph. Let $\mu_1$ be the maximum value of a vertex in~$G_1$, and let $B_1\subsetneq V[G_1]$ be the set of vertices of~$G_1$ not known to have value~$\mu_1$. We then let $G_2=G_1[B_1]$, compute the maximum value~$\mu_2$ of a vertex in~$G_2$ and let~$B_2$ be the set of vertices of~$G_2$ not known to have value~$\mu_2$. We define in this way games $G_1,G_2,\ldots,G_k$, values $\mu_1,\mu_2,\ldots,\mu_k$ and sets $V=B_0\supsetneq B_1 \supsetneq\dots\supsetneq B_{k}=\emptyset$ such that $G_{i+1}=G[B_i]$. If $u\in B_{i-1}\setminus B_{i}$, then by repeated application of Lemma~\ref{L-peel} we get that $val_G(u)=\min\{\mu_1,\mu_2,\ldots,\mu_i\}$. It is interesting to note that the sequence is not necessarily increasing or decreasing.

\section{Reducing \ZeroRecover\ to \FindOther}\label{S-second}

The reduction from \ZeroRecover\ to \FindOther\ uses another simple transformation. A similar transformation is used by Bj{\"o}rklund and Vorobyov~\cite{BjVo07} in their reduction to the longest-shortest-path problem.

Let $\MPGG$ be an MPG in which all values are~$0$. We construct an MPG $G'=(V',V'_0,V_1,E',c')$ in which all values are~$0$ and in which player~$0$ has an obvious optimal strategy that ensures the $0$ values. We obtain $G'$ by adding a new sink vertex~$t$ with a self-loop $t\TO t$ of cost $c'(t\TO t)=0$. From every vertex $u\in V_0$, we add an \emph{escape} edge $u\TO t$ of cost $c'(u\TO t)=0$. We let $V'_0=V_0\cup \{t\}$. (The costs of the escape edges can actually be arbitrary, and as~$t$ has only one outgoing edge, it does not matter which player controls it.) The values of all vertices in~$G'$ are~$0$ as any use of an escape edge yields an outcome of~$0$.

\begin{lemma}\label{L-last-trans}
    The strategy $\sigma_0$ defined by $\sigma_0(u)=(u\TO t)$, for every $u\in V'_0$, is an optimal strategy for player~$0$ in~$G'$.
\end{lemma}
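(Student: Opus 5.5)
The plan is to show two inequalities. First, all values in $G'$ are at least $0$. Second, $\sigma_0$ caps every outcome at $0$.

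For the second inequality, fix $\sigma_0$ as in the statement and let $\sigma_1$ be an arbitrary positional strategy of player~$1$ in $G'$. By Lemma~\ref{L-out}, $play^{\sigma_0,\sigma_1}_{G'}(u)$ ends in a cycle, and the outcome is that cycle's average cost. I will split into two cases according to whether this cycle contains a vertex of $V'_0$.

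If it contains a vertex of $V'_0$, then $\sigma_0$ sends the play to $t$, so the cycle is the self-loop $t\TO t$, with average $0$. Otherwise the cycle lies entirely in $V_1$ and uses only original edges of $G$, since player~$1$ has no escape edges. This is a cycle controlled solely by player~$1$ in $G$. Its average must be at most $0$: if it were positive, player~$1$ could, from a vertex on it, follow this cycle forever against any strategy of player~$0$, because player~$0$ never gets to move. That vertex would then have positive value in $G$, contradicting the promise that all values in $G$ are $0$. In both cases the outcome is at most $0$, so $\sigma_0$ guarantees at most $0$ from every vertex. (By Theorem~\ref{T-MPG} it suffices to range over positional $\sigma_1$, and the same per-vertex argument also covers general $\sigma_1$.)

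For the first inequality, take an optimal strategy $\tau_1$ of player~$1$ in $G$ and reuse it in $G'$. Against any player~$0$ strategy, a play either never uses an escape edge, in which case it is a play of $G$ against $\tau_1$ and so has outcome at least $val_G=0$, or it reaches $t$, in which case the outcome is $0$. From $t$ itself the value is trivially $0$. Hence all values of $G'$ are exactly $0$, and $\sigma_0$ attains them, so $\sigma_0$ is optimal.

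The only step needing care is the case where the cycle avoids $V'_0$. This is exactly where the promise on $G$ is used, via the observation that a cycle consisting only of player~$1$ vertices certifies a lower bound on their values.
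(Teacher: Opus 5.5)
Your proof is correct and follows the paper's argument. Against an arbitrary positional $\sigma_1$, the play either falls into $t$ (outcome $0$) or stays forever among player-$1$ vertices, where a positive outcome would contradict $val_G\equiv 0$. The differences are only cosmetic: you split on the terminal cycle rather than the whole play, and you explicitly verify that all values of $G'$ are $0$ by reusing an optimal strategy of player~$1$ in $G$, which the paper merely asserts in the sentence preceding the lemma.
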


\begin{proof}
    Let $\sigma_1$ be an arbitrary (positional) strategy of player~$1$ in~$G$ and in~$G'$. (Note that the sets of strategies of player~$1$ in~$G$ and in~$G'$ are the same, as $V_1[G]=V_1[G']$.) Let $u\in V'$. Consider the play $play_{G'}^{\sigma_0,\sigma_1}(u)$ and the outcome $out_{G'}^{\sigma_0,\sigma_1}(u)$. The play either reaches a vertex of~$V_0$, escapes to~$t$ and remains there forever, or visits only vertices of~$V_1$. In the first case, we clearly have $out_{G'}^{\sigma_0,\sigma_1}(u)=0$. In the second case, the play is identical to a play in the original game~$G$, with respect to any strategy~$\sigma'_0$ of player~$0$ in~$G$, and thus $out_{G'}^{\sigma_0,\sigma_1}(u) = out_G^{\sigma'_0,\sigma_1}(u)\le 0$, as otherwise $\sigma_1$ could ensure a positive value from~$u$ in~$G$ against any strategy of player~$0$, contradicting the assumption that all values in~$G$ are~$0$.
\end{proof}

\begin{lemma}\label{L-GG}
    If $\sigma_1$ is an optimal strategy for player~$1$ in $G'$, then it is also an optimal strategy for player~$1$ in~$G$. 
\end{lemma}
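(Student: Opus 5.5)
To prove Lemma~\ref{L-GG}, I will compare directly what player~$0$ can do against $\sigma_1$ in the two games. The setting is as follows. All values in~$G$ are~$0$, and hence, as noted above, all values in~$G'$ are~$0$ as well. The strategy sets of player~$1$ in~$G$ and in~$G'$ coincide, since $V_1[G]=V_1[G']$. So $\sigma_1$ is a legitimate positional strategy in~$G$. To show it is optimal in~$G$, it suffices to show that $out_G^{\sigma_0,\sigma_1}(u)\ge 0=val_G(u)$ for every $u\in V$ and every positional strategy $\sigma_0$ of player~$0$ in~$G$. By Theorem~\ref{T-MPG} (or Lemma~\ref{L-counter}, since an optimal counter strategy may be taken positional), positional counter strategies suffice.

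The key step is to note that every positional strategy $\sigma_0$ of player~$0$ in~$G$ is also a positional strategy of player~$0$ in~$G'$. We simply extend it by the forced move $t\TO t$ at~$t$; this strategy never uses escape edges. Starting from any $u\in V$, the play $play_{G'}^{\sigma_0,\sigma_1}(u)$ never visits~$t$. The reason is that the only edges into~$t$ are escape edges, which are chosen only by player~$0$ at vertices of~$V_0$, and $\sigma_0$ never chooses them. Hence the play in~$G'$ is literally the same walk as $play_G^{\sigma_0,\sigma_1}(u)$, with the same edge costs, since $c'$ agrees with~$c$ on~$E$. Therefore $out_{G}^{\sigma_0,\sigma_1}(u)=out_{G'}^{\sigma_0,\sigma_1}(u)$.

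Optimality of $\sigma_1$ in~$G'$ gives $out_{G'}^{\sigma_0,\sigma_1}(u)\ge val_{G'}(u)=0$ for every strategy $\sigma_0$ of player~$0$ in~$G'$, in particular for the extended one. Combining, $out_G^{\sigma_0,\sigma_1}(u)\ge 0=val_G(u)$ for all $u\in V$ and all $\sigma_0$, which shows $\sigma_1$ is optimal in~$G$.

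There is no real obstacle here. The only point needing care is justifying $val_{G'}(u)=0$ for $u\in V$, which was already argued: player~$0$ can guarantee $\le 0$ by Lemma~\ref{L-last-trans}, and player~$1$ can guarantee $\ge 0$ by playing an optimal strategy of~$G$, since any escape yields outcome~$0$. I would restate this briefly, and then observe that the plays coincide when player~$0$ avoids escape edges.
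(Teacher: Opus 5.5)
Your proof is correct and is essentially the paper's argument: every strategy of player~$0$ in~$G$ is also a strategy in~$G'$ that avoids escape edges, so the two plays coincide, and the optimality of $\sigma_1$ in~$G'$ (where all values are~$0$) transfers directly to~$G$. Your extra checks, that the play never reaches~$t$ and that $val_{G'}\equiv 0$, only make explicit what the paper takes for granted.
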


\begin{proof}
    Since $\sigma_1$ is an optimal strategy for player~$1$ in~$G'$, and all values in~$G'$ are~$0$, we have $out_{G'}^{\sigma'_0,\sigma_1}(u)\ge 0$ for every strategy~$\sigma'_0$ of player~$0$ in~$G'$. Since every strategy $\sigma_0$ of player~$0$ in~$G$ is also a strategy of player~$0$ in~$G'$, we also have $out_G^{\sigma_0,\sigma_1}(u)\ge 0$, for every strategy $\sigma_0$ of player~$0$ in~$G$. This proves the optimality of~$\sigma_1$ in~$G$.
\end{proof}

The reduction from \ZeroRecover\ to \FindOther\ is now extremely simple. Given an MPG $G$ in which all values are~$0$, we construct the game~$G'$ and the optimal escape strategy $\sigma'_0$ for player~$0$ in~$G'$ defined in Lemma~\ref{L-last-trans}. We now call $\sigma'_1\gets\FindOther(G',\sigma'_0)$ and obtain an optimal strategy $\sigma'_1$ for player~$1$ in~$G'$. By Lemma~\ref{L-GG}, $\sigma'_1$ is also an optimal strategy of player~$1$ in~$G$.

To obtain an optimal strategy $\sigma_0$ for player~$0$ in~$G$, we do the same for the game $-G$. (Note that an optimal strategy for player~$1$ in $-G$ is an optimal strategy for player~$0$ in~$G$.)

As described, the reduction performs two calls to \FindOther. We can actually combine the two calls into one by working on a disjoint union of the escape extensions $G'$ and $(-G)'$. This gives us a Karp many-one reduction from \ZeroRecover\ to \FindOther.

\section{Reducing \Recover\ to \ZeroRecover}\label{S-ZeroRecover}

We next describe a simple Karp many-one reduction from \Recover\ to \ZeroRecover. This proves the equivalence of \ZeroRecover, \FindOther\ and \Recover\ under polynomial time Karp reductions. The only Turing reduction remaining is the main reduction from \Solve\ to \ZeroRecover.

Let $\MPGG$ be an MPG. Assume that we are given values $g(u)$, for $u\in V$, and a promise that $g(u)=val_G(u)$, for every $u\in V$. By Lemma~\ref{L-equal}, we know that if an edge $(u\to v)\in E$ is used by an optimal strategy of one of the players, then we must have $val_G(u)=val_G(v)$. This suggests constructing an MPG $G_g=(V,V_0,V_1,E_g,c_g)$ as follows:
\[ E_g \EQ \{ u\TO v \mid (u\TO v)\in E \;,\; g(u)=g(v) \} \;,\]
\[ c_g(u\TO v) \EQ c(u\TO v) - g(v) \quad,\quad (u\TO v)\in E_g \;.\]
By Lemma~\ref{L-equal}, every vertex has at least one value-preserving outgoing edge. Hence~$G_g$ is a well-defined MPG.

\begin{lemma}
    All values in~$G_g$ are~$0$. Optimal strategies $\sigma_0$ and~$\sigma_1$ of the two players in~$G_g$ are also optimal strategies of the two players in~$G$.
\end{lemma}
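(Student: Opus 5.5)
The key observation is that every edge of $E_g$ joins vertices with equal $g$-values. Hence along any cycle in $G_g$ the value $g$ is constant, and the total modified cost of a cycle of length $k$ with constant value $\gamma$ is its original cost minus $k\gamma$. More generally, for any eventually periodic play in $G_g$ the modified mean payoff equals the original mean payoff minus the common value $g$ of the vertices on the terminal cycle. Since $g$ is constant on each connected piece reachable in $G_g$, this common value is simply $g(u)$ for a play starting at $u$. So I will work with the identity
\[ out_{G_g}^{\sigma_0,\sigma_1}(u) \EQ out_{G}^{\sigma_0,\sigma_1}(u) - g(u) \;, \]
which holds for positional strategies in $G_g$; by Lemma~\ref{L-out} these can be viewed as strategies in $G$. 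Everything then reduces to comparing strategies in $G_g$ with strategies in $G$.

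First I show that all values of $G_g$ are $0$. Take optimal positional strategies $\sigma_0^*,\sigma_1^*$ of $G$. By Lemma~\ref{L-equal} they use only value-preserving edges, so they are valid strategies in $G_g$. Against $\sigma_0^*$, any strategy $\sigma_1$ of player~$1$ in $G_g$ is also a strategy in $G$, so
\[ out_{G_g}^{\sigma^*_0,\sigma_1}(u) \EQ out_G^{\sigma^*_0,\sigma_1}(u)-g(u) \LE val_G(u)-g(u) \EQ 0 \;. \]
Symmetrically, $\sigma_1^*$ guarantees at least $0$. Hence $val_{G_g}\equiv 0$.

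For the second part, let $\sigma_0$ be optimal for player~$0$ in $G_g$. I must show that $out_G^{\sigma_0,\sigma_1}(u)\le val_G(u)$ for every positional $\sigma_1$ of player~$1$ in $G$, and $\sigma_1$ may use edges outside $E_g$. This is the main obstacle. Along the play, the edges chosen by $\sigma_0$ keep $g$ constant, and every edge chosen by player~$1$ satisfies $g(v)\le g(u)$ by Lemma~\ref{L-equal}, since $u\in V_1$ takes the max. So $g$ is nonincreasing along the play and hence constant on the terminal cycle. Call that constant $\gamma\le g(u)$.

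The terminal cycle then lies inside the level set $\{g=\gamma\}$ and uses only $E_g$ edges, since both endpoints of each of its edges have value $\gamma$. From a vertex $w$ on the cycle, the play coincides with a play in $G_g$ where player~$1$ follows a strategy $\sigma_1'$ that agrees with $\sigma_1$ on the cycle and is arbitrary (within $E_g$) elsewhere. Optimality of $\sigma_0$ in $G_g$ then gives that the modified cycle mean is at most $0$, so the original cycle mean is at most $\gamma\le g(u)=val_G(u)$.

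The argument for $\sigma_1$ is symmetric: player~$0$'s edges can only keep $g$ constant or increase it, so the terminal cycle has value at least $g(u)$ and mean at least that value. The one detail to check is that restricting to the terminal cycle is legitimate, that is, that $\sigma_0$'s guarantee from $w$ in $G_g$ applies. This holds because $\sigma_0$ is optimal from every vertex, and the cycle is a play from $w$ consistent with $\sigma_0$ and with some strategy of player~$1$ in $G_g$.
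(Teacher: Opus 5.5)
Your proof is correct and follows the paper's approach. Values along the play are monotone, since only the opponent can leave $E_g$, so the terminal cycle lies in a single level set of $g$, and the identity $out_{G_g}=out_G-g$ on that cycle transfers the guarantee of the strategy that is optimal in $G_g$. Merging the paper's two cases via $\gamma\le g(u)$, and getting $val_{G_g}\equiv 0$ directly by restricting optimal strategies of $G$ (which use only $E_g$ edges by Lemma~\ref{L-equal}), are just slightly tidier versions of the same reasoning.
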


\begin{proof}
    Let $G'_g=(V,V_0,V_1,E_g,c)$, i.e., the game $G_g$ but with the original edge costs. By Lemma~\ref{L-equal}, we get that $val_G(u)=val_{G'_g}(u)$, for every $u\in V$, as optimal strategies for the two players in~$G$ must only use value-preserving edges. Thus, none of the players can benefit by using non-value preserving edges, i.e., edges from $E\setminus E_g$.
    
    The game $G'_g$ is composed of disjoint components corresponding to the distinct values in~$G$. The component corresponding to a value~$\mu$ is $G'_{g}[V_\mu]$, where $V_\mu=\{u\in V \mid g(u)=\mu\}$. (A component is not necessarily connected.) By subtracting $\mu$ from the costs of all edges in $G'_g[V_\mu]$, as we do in~$G_g$, the values of all the vertices in $G_g[V_\mu]$ become~$0$.

    Let $\sigma_0$ be a positional optimal strategy of player~$0$ in~$G_g$. Since~$G_g$ is a subgame of~$G$, the strategy~$\sigma_0$ is also a valid strategy of player~$0$ in~$G$. Let $\sigma_1$ be an arbitrary positional strategy of player~$1$ in~$G$. Let $u\in V$ and suppose that $g(u)=val_G(u)=\mu$. Consider $play_G^{\sigma_0,\sigma_1}(u)$. 
    
    If $play_G^{\sigma_0,\sigma_1}(u)$ only uses edges of $E_g$, then $play_G^{\sigma_0,\sigma_1}(u)$ remains in the component of~$G_g$ corresponding to~$\mu=g(u)=val_G(u)$ and $play_G^{\sigma_0,\sigma_1}(u)=play_{G_g}^{\sigma_0,\sigma_1}(u)$. By the optimality of~$\sigma_0$ in~$G_g$ we have $out_{G_g}^{\sigma_0,\sigma_1}(u)\le 0$. Since $out_{G_g}^{\sigma_0,\sigma_1}(u) =out_{G}^{\sigma_0,\sigma_1}(u)-g(u)$, we have $out_{G}^{\sigma_0,\sigma_1}(u)\le g(u)=val_G(u)$.

    If $play_G^{\sigma_0,\sigma_1}(u)$ uses edges of $E\setminus E_g$, then these edges must be from~$\sigma_1$. By Lemma~\ref{L-equal}, for any such edge $u'\TO v'$ we must have $val_G(v')<val_G(u')$. For every edge $u'\TO v'$ from~$\sigma_0$ we have $val_G(v')=val_G(u')$. Thus, all vertices in the cycle in $play_G^{\sigma_0,\sigma_1}(u)$ must have the same value $\mu'<\mu$. Since the tail of $play_G^{\sigma_0,\sigma_1}(u)$ corresponds to a play in~$G_g$ belonging to the component containing vertices of value~$\mu'<\mu$, we get by the optimality of $\sigma_0$ for~$G_g$ that $out_G^{\sigma_0,\sigma_1}(u)\le \mu'<\mu=val_G(u)$. 

    Thus, for every strategy $\sigma_1$ of player~$1$ we have $out_G^{\sigma_0,\sigma_1}(u)\le g(u)=val_G(u)$. Hence, $\sigma_0$ is also an optimal strategy for player~$0$ in~$G$. A symmetric argument shows that every optimal strategy $\sigma_1$ of player $1$ in $G_g$ is also optimal in~$G$: under $\sigma_1$ the values along a play are nondecreasing, and strictly increase whenever player $0$ uses an edge outside $E_g$.
\end{proof}

The reduction is now obvious. Given an MPG~$G$ and the values $g(u)=val_G(u)$, for $u\in V$, form the graph~$G_g$ and call $\ZeroRecover(G_g)$.

\section{Some examples}\label{S-examples}

We now give some simple examples that illustrate some points made earlier in the paper.

Figure~\ref{fig:counter} provides a simple example that shows that an optimal counter strategy of an optimal strategy is not necessarily an optimal strategy.

Figure~\ref{fig:example1} provides two simple games~$G$ and~$H$ such that both players have entangled, i.e., non-product, optimal strategies in $G\times H$.

A second example of this form is given in Figure~\ref{fig:example2}. This example has two added features. The first is that when both players play according to their entangled optimal strategies $\rho_0$ and $\rho_1$, the same cycle is eventually entered, from every initial vertex. The second advantage is that the optimality of $\rho_0$ and $\rho_1$ can be certified via Theorem~\ref{T-gain-bias} using the biases shown in green in the figure.

\begin{figure}[t]
\centering
    \includegraphics[width=0.4\linewidth]{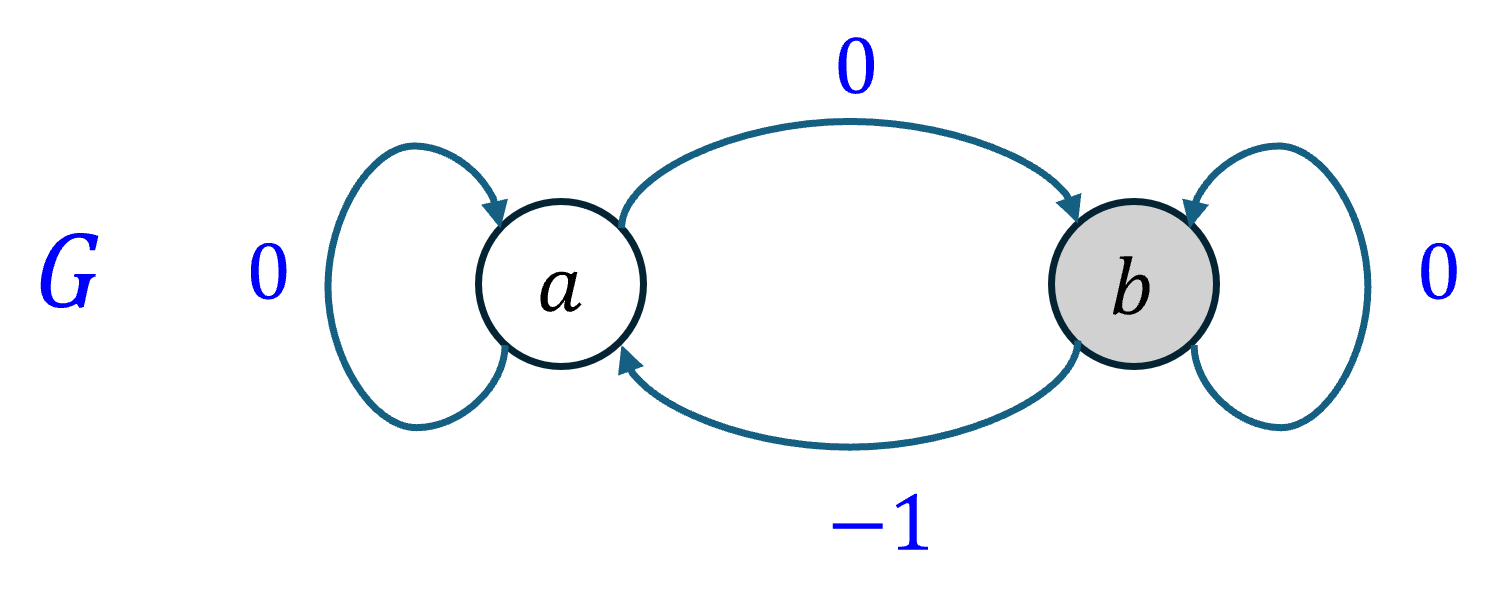}
    \caption{An optimal counter strategy to an optimal strategy is not necessarily an optimal strategy. In the figure, $V_0=\{a\}$ and $V_1=\{b\}$. The strategy $\sigma_0(a)=(a\TO a)$ is an optimal strategy of player~$0$. The strategy $\sigma'_0(a)=(a\TO b)$ is another optimal strategy of player~$0$. They both ensure values of at most~$0$ from both~$a$ and~$b$. The strategy $\sigma_1(b)=(b\TO a)$ is an optimal counter strategy to~$\sigma_0$, but not to~$\sigma'_0$. Hence it is not an optimal strategy for player~$1$. The optimal strategy for player~$1$ is $\sigma'_1(b)=(b\TO b)$ which ensures a value of~$0$ from both vertices.}
    \label{fig:counter}
\end{figure}

\begin{figure}[p]
\centering
\begin{minipage}[c]{0.35\textwidth}
    \centering
    \includegraphics[width=\linewidth]{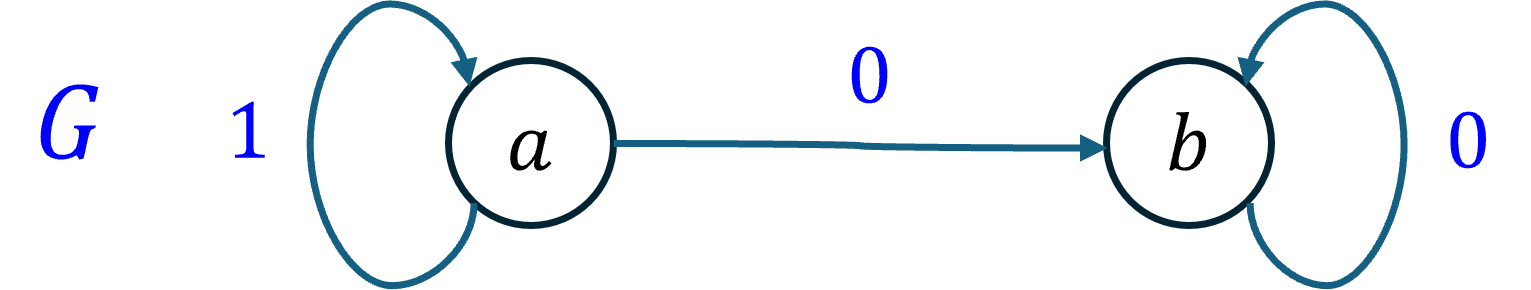}

    \vspace{1cm}

    \includegraphics[width=\linewidth]{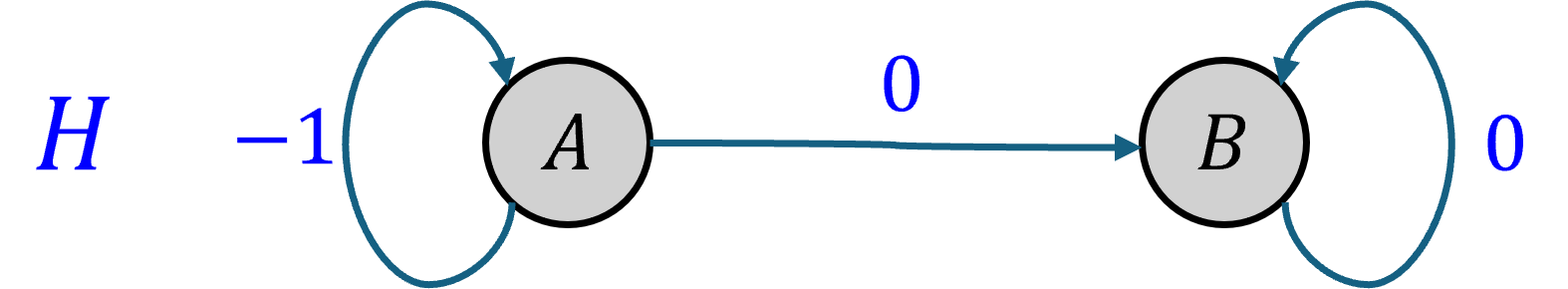}
\end{minipage}
\hspace*{1cm}
\begin{minipage}[c]{0.4\textwidth}
    \centering
    \includegraphics[width=\linewidth]{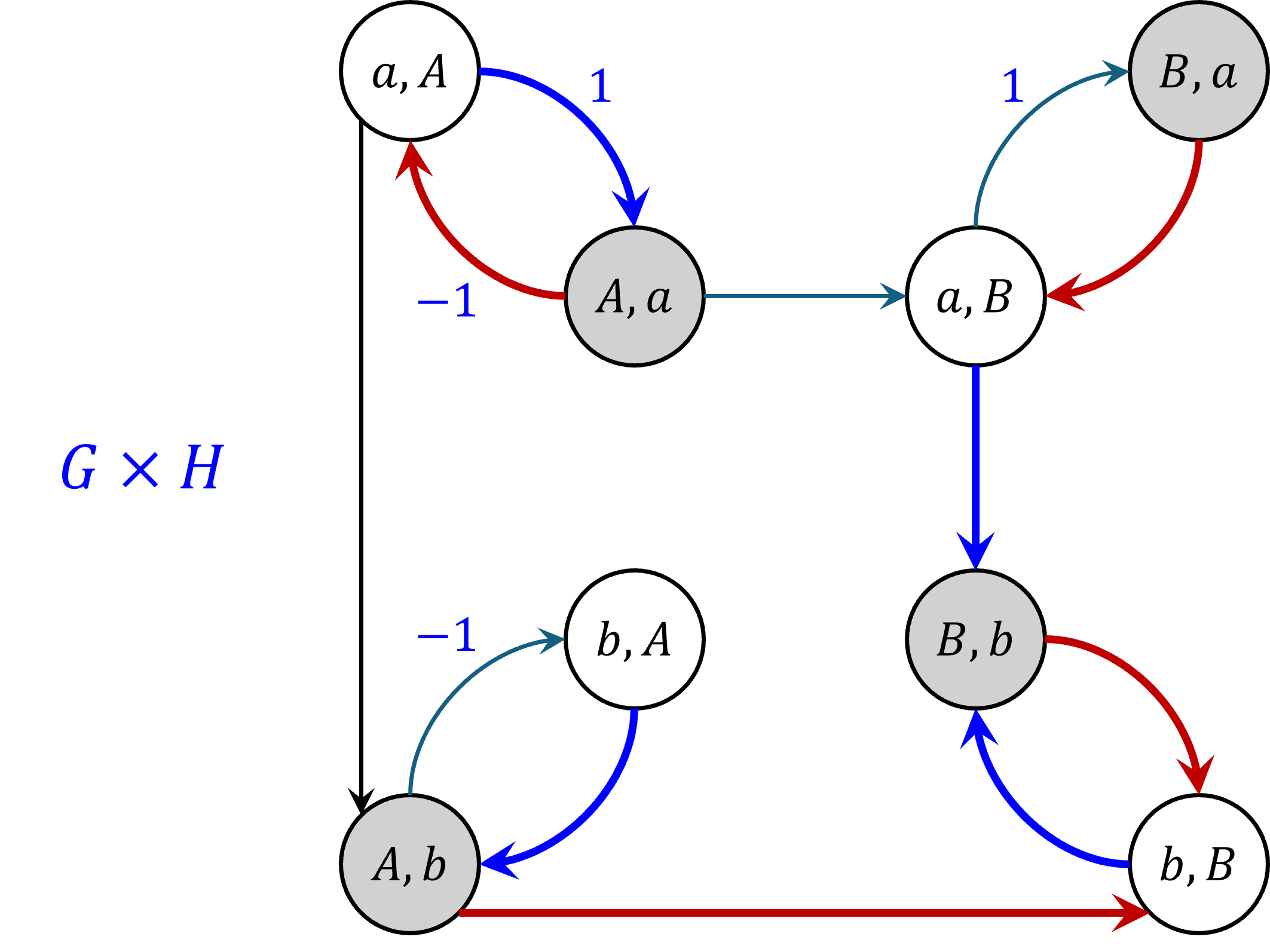}
\end{minipage}
    \caption{An example of two simple games~$G$ and~$H$ such that in $G\times H$ both players have entangled optimal strategies, i.e., optimal strategies that are not product strategies. The vertices $a$ and~$b$ of~$G$ are controlled by player~$0$. The vertices $A$ and~$B$ of~$H$ are controlled by player~$1$. All omitted edge costs in $G\times H$ are~$0$. All values in~$G$ and in~$H$, and hence also in $G\times H$, are~$0$. An optimal strategy $\rho_0$ of player~$0$ is given by the bold blue edges. The strategy is entangled as $\rho_0(a,A)=((a,A)\TO (A,a))$ while $\rho_0(a,B)=((a,B)\TO (B,b))$. Similarly, an optimal strategy $\rho_1$ of player~$1$ is given by the bold brown edges. The strategy is entangled as $\rho_1(A,a)=((A,a)\TO (a,A))$ while $\rho_1(A,b)=((A,b)\TO (b,B))$.}
    \label{fig:example1}
\end{figure}

\begin{figure}[p]
\centering
\begin{minipage}[c]{0.35\textwidth}
    \centering
    \includegraphics[width=\linewidth]{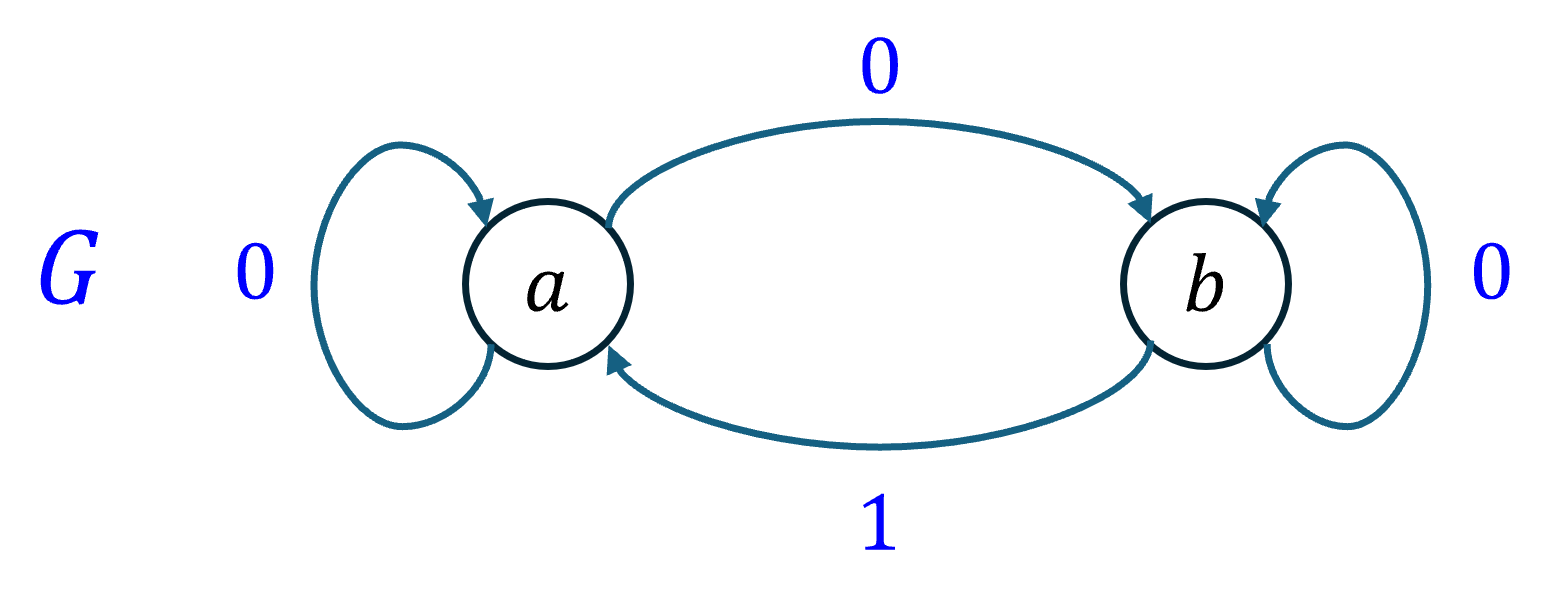}

    \vspace{1cm}

    \includegraphics[width=\linewidth]{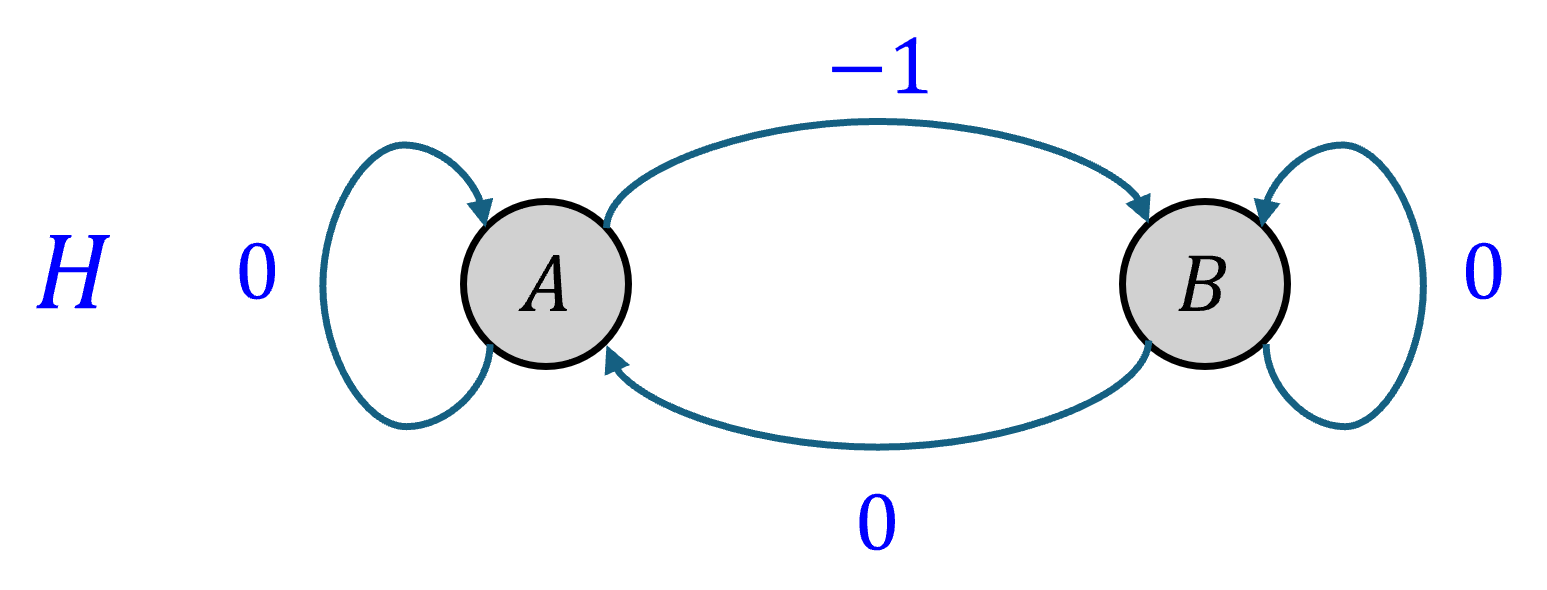}
\end{minipage}
\hspace*{1cm}
\begin{minipage}[c]{0.4\textwidth}
    \centering
    \includegraphics[width=\linewidth]{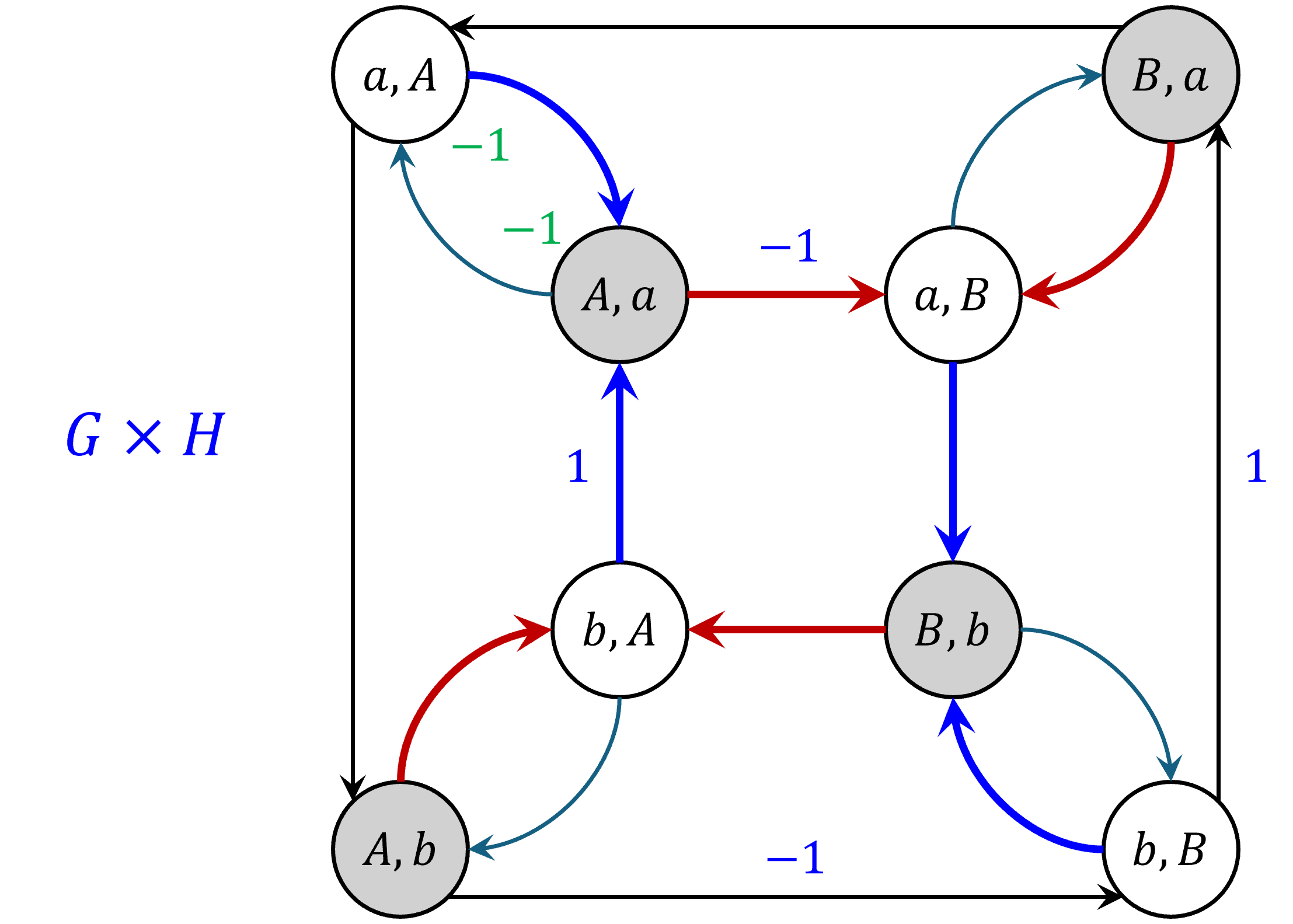}
\end{minipage}
    \caption{The second example. The entangled optimal strategies $\rho_0$ and $\rho_1$ of the two players in $G\times H$ are again given by the bold blue and brown edges. All values are again~$0$. The optimality of $\rho_0$ and $\rho_1$ can be certified using the biases $b(a,A)=b(A,a)=-1$, and 0 for all other pairs. The two non-zero biases are shown in green next to $(a,A)$ and $(A,a)$.}
    \label{fig:example2}
\end{figure}

\section{Concluding remarks and open problems}\label{S-concl}

We showed that the four problems \Solve, \ZeroRecover, \FindOther\ and \Recover\ are equivalent under polynomial time Turing reductions. Furthermore, the last three problems are equivalent under polynomial time Karp reductions. Are the four problems equivalent under polynomial time Karp reductions?

As pointed out by a reviewer, we can actually add a fifth problem to the list of equivalent problems. Let \mbox{\sc Value} be the problem of computing the values of all vertices of a game. A simple self-reducibility argument shows that \Solve\ Turing-reduces to \mbox{\sc Value}.

The most interesting open problem, of course, is to obtain improved algorithms for the solution of MPGs. The techniques that led to a \emph{quasipolynomial} time algorithm for \emph{Parity Games} do not seem to give significant asymptotic improvements for MPGs. (See, e.g., Fijalkow, Gawrychowski, and Ohlmann \cite{FijalkowGO20}, Colcombet, Fijalkow, Gawrychowski, and Ohlmann \cite{ColcombetFGO22}.) Thus, new techniques are probably needed to obtain improved algorithms.

\subsection*{Note added in proof} OpenAI \cite{OAI:Deterministic-quasipolynomial-time-mean-payoff-games-September-25-2026} recently announced a quasipolynomial-time algorithm for solving MPGs. As this is not yet a polynomial-time algorithm, all the reductions in this paper are still meaningful. 

\section*{Declaration} 

Interactions with ChatGPT 5.6 Pro influenced the content of Section~\ref{sub-peeling}, Section~\ref{S-second} and Section~\ref{S-ZeroRecover}. The authors assume responsibility for all content.

\section*{Acknowledgments}

The second author would like to thank Peter Bro Miltersen for bringing the strategy recovery problem to his attention and for some interesting discussions. We also thank an anonymous reviewer for pointing out that a fifth problem can be added to the list of equivalent problems.

\bibliography{references}

\end{document}